\newtheorem{example}{Example}
\newtheorem{lemma}{Lemma}
\newtheorem{definition}{Definition}
\def\BibTeX{{\rm B\kern-.05em{\sc i\kern-.025em b}\kern-.08em
    T\kern-.1667em\lower.7ex\hbox{E}\kern-.125emX}}
\begin{document}

\title{MultiEM: Efficient and Effective Unsupervised Multi-Table Entity Matching}

\author{\IEEEauthorblockN{Xiaocan Zeng, Pengfei Wang, Yuren Mao, Lu Chen, Xiaoze Liu, Yunjun Gao}
\IEEEauthorblockA{Zhejiang University}\IEEEauthorblockA{\{zengxc, wangpf, yuren.mao, luchen, xiaoze, gaoyj\}@zju.edu.cn
}}

\maketitle
\thispagestyle{plain}
\pagestyle{plain}

\begin{abstract}
Entity Matching (EM), which aims to identify all entity pairs referring to the same real-world entity from relational tables, is one of the most important tasks in real-world data management systems. Due to the labeling process of EM being extremely labor-intensive, unsupervised EM is more applicable than supervised EM in practical scenarios. Traditional unsupervised EM assumes that all entities come from two tables; however, it is more common to match entities from multiple tables in practical applications, that is, multi-table entity matching (multi-table EM). 
Unfortunately, effective and efficient unsupervised multi-table EM remains under-explored. To fill this gap, this paper formally studies the problem of unsupervised multi-table entity matching and proposes an effective and efficient solution, termed as \textsf{MultiEM}. \textsf{MultiEM} is a parallelable pipeline of \emph{enhanced entity representation}, \emph{table-wise hierarchical merging}, and \emph{density-based pruning}. Extensive experimental results on six real-world benchmark datasets demonstrate the superiority of \textsf{MultiEM} in terms of effectiveness and efficiency.
\end{abstract}

\begin{IEEEkeywords}
Entity Matching, Data Integration
\end{IEEEkeywords}

\section{Introduction}
\label{intro}

Entity Matching (EM), one of the most fundamental and significant tasks in data management and data preparation, aims to identify all pairs of entity records that refer to the same real-world entity from relational tables.
Most existing studies \cite{li2020deep,wang2022promptem,li2021auto,tu2022domain} assume that all entities come from two tables, namely two-table entity matching. This assumption limits the application of these methods in practical scenarios involving multiple tables. For example, some online price comparison services (e.g., Pricerunner \cite{pricerunner} and Skroutz \cite{skroutz}) compare the prices of the same product on multiple e-commerce platforms so that shoppers can search for the best deals. Because there are different titles or descriptions on different e-commerce platforms for identical products, one of the most important steps is to effectively identify the same product from multiple sources. As shown in Figure \ref{fig:example}, four entities from different sources refer to the same real-world entity (i.e., Apple iPhone 8 plus
64GB silver) with similar but different titles and colors. Furthermore, multiple sources lead to an increase in the number of entities, which imposes a higher requirement on the efficiency of entity matching.

\begin{figure}
    \centering
    \includegraphics[width=3.3in]{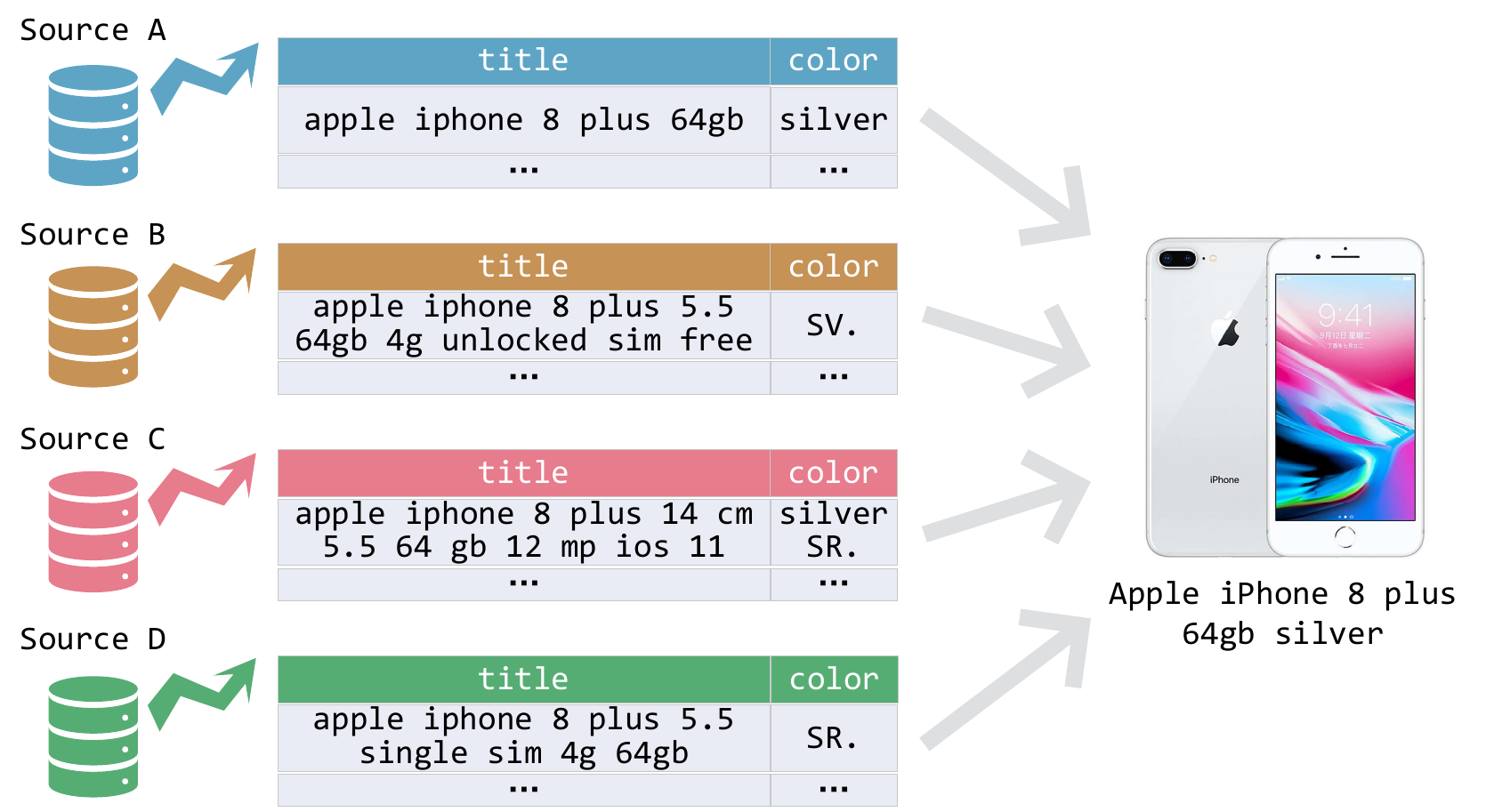}
    \caption{An example of Multi-Table Entity Matching.}
    \label{fig:example}
\end{figure}

Most existing EM methods are typically performed in a supervised~\cite{li2020deep,tu2022domain,mudgal2018deep} or semi-supervised \cite{wang2022promptem,primpeli2021graph} learning way, which rely on large amounts of labeled data, and thus is extremely labor-intensive~\cite{wu2020zeroer,li2021auto}. Therefore, performing entity matching in an unsupervised manner has become an urgent need recently. Existing unsupervised methods for multi-table EM (i.e., \textsf{MSCD-HAC} \cite{saeedi2021matching} and \textsf{MSCD-AP}~\cite{lerm2021extended}) run in a clustering way and perform poorly in terms of effectiveness and efficiency: (1) They are influenced by the complexity of clustering algorithms (i.e., hierarchical agglomerative clustering and affinity propagation) and have problematic scalability. (2) Their absence of effective entity representations poses a significant predicament, as the accuracy of clustering relies on the quality of the representations. Sophisticated analyses on the ineffectiveness and inefficiency of the existing multi-table EM methods can be found in Section \ref{experiments}.





Motivated by the above considerations, we study the problem of unsupervised multi-table entity matching. Our goal is to develop an efficient and effective solution for multi-table entity matching without the need for human-labeled data, which is a challenging endeavor. The challenges are mainly two-fold:


\noindent\textbf{Challenge \uppercase\expandafter{\romannumeral 1}:} \emph{How can multiple tables be matched efficiently?}  Recently, there has been an urgent need to efficiently match entities in large-scale data scenarios \cite{Gazzarri2022ProgressiveER,Papadakis2021TheFG}. Furthermore, in multi-table EM, multiple data sources bring a surge in the number of entities, putting forward a higher need for the matching efficiency. Existing unsupervised multi-table EM methods can be divided into three categories: clustering-based methods \cite{saeedi2021matching,lerm2021extended} and two extended methods from two-table EM \cite{li2021auto} using pairwise matching and chain matching, respectively. All of these approaches suffer from inefficiency issues.

Firstly, clustering-based methods involve clustering operations that are not inefficient. Secondly, pairwise matching-based methods (illustrated in Figure \ref{fig:comparision}(a)) are directly extended from the two-table EM methods by means of pairwise comparison between any two tables, which suffer quadratic time complexity. Besides, chain matching-based methods (illustrated in Figure \ref{fig:comparision}(c)) extend two-table EM by matching tables one by one, which is not parallelizable. Moreover, as the size of the base table increases, the two-table matching efficiency gradually declines. Overall, efficient multi-table EM methods remain less explored.


\begin{figure}
    \centering
    \includegraphics[width=3.3in]{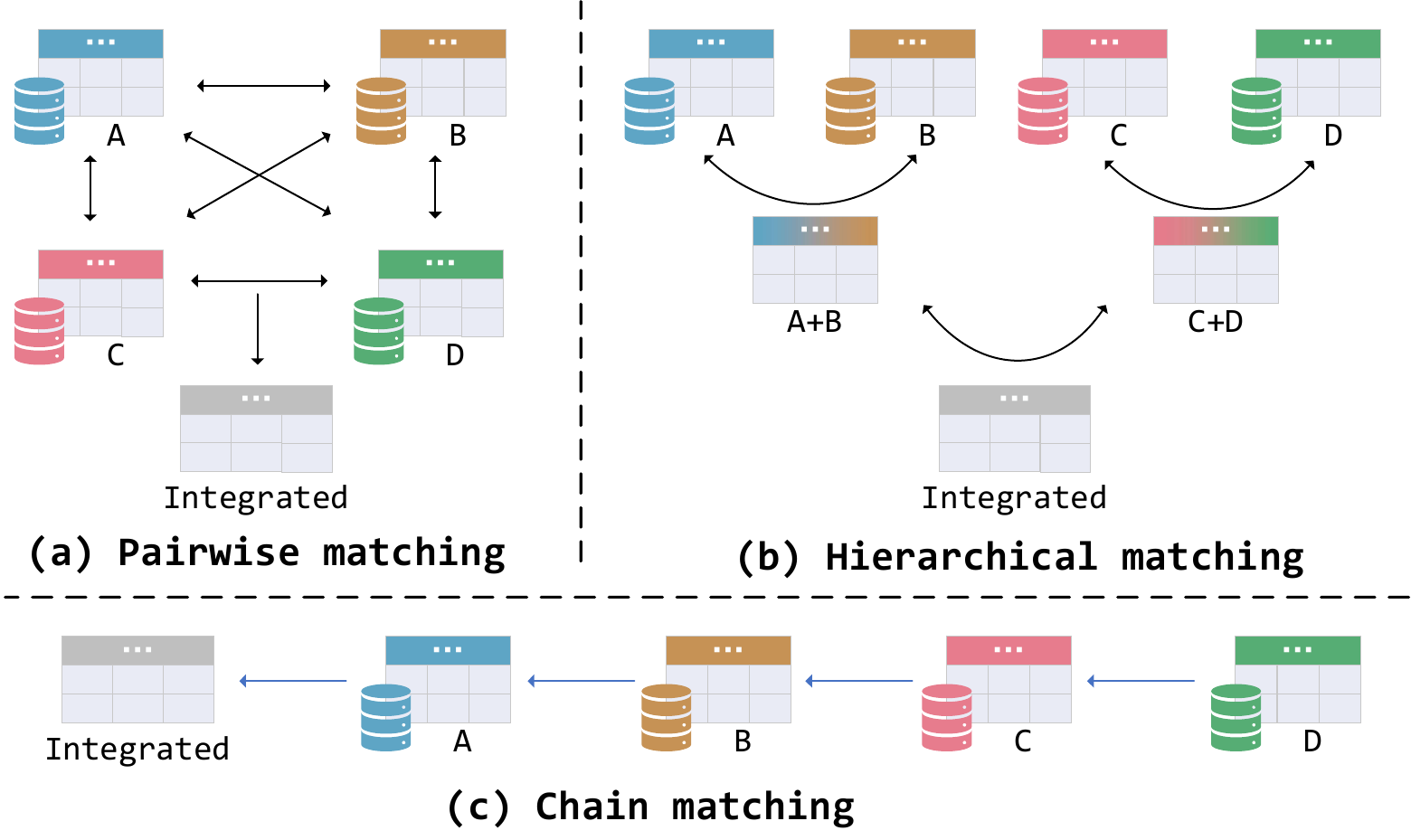}
    \caption{Different solutions for Multi-Table Entity Matching.}
    \label{fig:comparision}
\end{figure}


\noindent\textbf{Challenge \uppercase\expandafter{\romannumeral 2}:} \emph{How can multiple tables be matched effectively?} As one of the most significant tasks in data management, the effectiveness of entity matching is crucial. However, existing methods for unsupervised multi-table entity matching face two major obstacles in effectiveness. The first is the limited capability of entity representation, and the second is the existence of transitive conflicts for entity matching.

Data representation is the core for improving the effectiveness of most unsupervised data integration tasks \cite{cappuzzo2020creating,Yin2020TaBERTPF}. However, existing unsupervised entity matching methods have limitations in terms of the effectiveness of entity representation. \textsf{EMBDI} \cite{cappuzzo2020creating} learns local embeddings of entities through random walks on the heterogeneous graph, which relies more on co-occurrence relationships on a graph and neglects high-level semantic information. \textsf{AutoFJ} \cite{li2021auto}, \textsf{MSCD-HAC} \cite{saeedi2021matching}, and \textsf{MSCD-AP} \cite{lerm2021extended} use only n-gram tokenization and string-based similarity functions, which may lack some useful contextual information. Furthermore, these methods treat each attribute of the record equally without considering that each attribute may contribute differently to the representation.




Transitive conflicts is another important factor that significantly influences the performance of the effectiveness of multi-table entity matching. In multi-table EM, we need to find matched tuples (i.e., a \emph{group} of equivalent entities) rather than matched pairs. Thus, it requires aggregating matched pairs into tuples, which involves transitivity. Transitivity is a key property of entity matching, that is, if $A$ matches $B$ and $B$ matches $C$, then $A$ can be inferred to match $C$ as well. However, existing EM methods inevitably make incorrect predictions, which are propagated and result in transitive conflicts. These conflicts pose a significant obstacle to the effectiveness of matching. Moreover, as the number of tables increases, such conflicts become more complex.

To address the above two challenges, we propose an unsupervised multi-table entity matching method, dubbed \textsf{MultiEM}, which can achieve efficient and effective multi-table entity matching. In \textsf{MultiEM}, we firstly formulate multi-table EM as a two-step process (i.e., \emph{merging} and \emph{pruning}). To overcome the efficiency challenge, we present a parallelizable table-wise hierarchical merging algorithm to accelerate the matching of multiple tables. Furthermore, to address the effectiveness challenge, in MultiEM, we enhance the entity representation quality by a novel automated attribute selection strategy and handle transitive conflicts by hierarchical merging, which explicitly avoids the disjointed process of generating matched pair and converting pairs to tuples. Moreover, we develop a density-based pruning strategy to erase outliers and further improve the matching effectiveness. Our contributions are summarized as follows.



\begin{itemize}[topsep=0pt,itemsep=0pt,parsep=0pt,partopsep=0pt,leftmargin=*]
    \item \emph{Unsupervised Multi-Table EM.} To the best of our knowledge, this is the first work to formally define unsupervised multi-table entity matching problem and formulate it as a two-step (i.e., \emph{merging} and \emph{pruning}) process. 
    \item \emph{Efficient and Effective Pipeline.} We propose a novel unsupervised multi-table entity matching method, dubbed \textsf{MultiEM}, which can achieve state-of-the-art performance on efficiency and effectiveness.
    \item \emph{Extensive Experiments.} We conduct a comprehensive experimental evaluation on six real-world datasets with various domains, sizes, and numbers of sources. Extensive experimental results demonstrate the superiority of our proposed \textsf{MultiEM} in terms of effectiveness and efficiency.
\end{itemize}

\section{Preliminaries}
\label{preliminaries}

\begin{table}
\caption{Symbols and description.}
\label{table:notation}
\centering
\begin{tabular}{c|c}
\toprule
Symbol & Description \\ \midrule
    $\mathcal{D}$ & A set of tables $\mathcal{D} = \{E_1, E_2, \cdots, E_S\}$\\
    $E$ & A relational table $E=\{e_1, e_2, \cdots, e_m\}$ \\
    $S$ & The number of tables \\
    $e_i$ & An entity $e_i=\{(\text{attr}_j, \text{val}_j) \vert 1\leq j \leq p\}$\\
    $\text{attr}_j$ & An attribute name of the entity \\
    $\text{val}_j$ & A value of the entity \\
    $\mathcal{M}$  & The Sentence-BERT encoder \\
    $x$ & The text sequence of the entity \\
    $w$ & The encoded result of the entity \\
    $h$ & The embedding of the entity \\
    $n$ & The number of entities in one table\\
    \bottomrule
\end{tabular}
\end{table}

In this section, we illustrate the definition of typical two-table entity matching and then formally define the multi-table entity matching. Additionally, we provide an overview of the relevant background materials and techniques utilized in subsequent sections. Table \ref{table:notation} summarizes the symbols that are frequently used throughout this paper.

\subsection{Problem Formulation}

\begin{definition}
\label{pre:two-table}
(two-table entity matching). Given two relational tables $E_A$ and $E_B$, two-table entity matching (two-table EM) aims to identify all \textit{pairs} of records $\mathcal{P}=\{(e^{A}_{i}, e^{B}_{j})\}_{u}$, where $e^{A}_{i} \in E_A$, $e^{B}_{j} \in E_B$, that refer to the same real-world entity. 
\end{definition}

Two-table entity matching consists of two steps in sequence: \emph{blocking} and \emph{matching} \cite{li2020deep}. \emph{Blocking} is a coarse-grained step to filter out mismatched entity pairs, reducing the number of candidate pairs for matching. \emph{Matching} is a subsequent fine-grained step to determine whether each candidate pair matches exactly.

\begin{definition}
\label{pre:multi-table}
(multi-table entity matching). Given a set of relational tables $\mathcal{E} = \{E_{1}, ..., E_{S}\}$, multi-table entity matching (multi-table EM) seeks to identify all \textit{tuples} of records $\mathcal{T}=\{(e_{1}, e_{2}, ..., e_{l})\}_{u}$, where each record is from one of the $S$ tables, that refer to the same real-world entity. Specifically, the size of each tuple $l \ge 2$.
\end{definition}

Inspired by two-table EM (\emph{blocking} and \emph{matching}), we formally define the pipeline for multi-table EM, dividing it into two key steps: \emph{merging} and \emph{pruning}. \emph{Merging} focuses on identifying potentially matched tuples across tables, while \emph{pruning} aims to determine the most accurate matches among the candidates.

Note that there is a significant difference between two-table and multi-table EM. Two-table EM aims to find all matched entity pairs. However, multi-table EM intends to identify matched tuples, which refer to a group of equivalent entities found across multiple tables. As analyzed in Section \ref{intro}, multi-table EM is more practical in the real world, with huge challenges in terms of efficiency and effectiveness.

\subsection{Sentence-BERT}

Sentence-BERT \cite{reimers2019sentence} is a variant of BERT model based on Siamese and triplet network structures. Sentence-BERT is appropriate for sentence representations and can be used for anything serialized into sentences \cite{ge2021collaborem, Wang2021MachampAG}. As a result, structural entities can be serialized into sentences based on specific rules and then converted into embeddings using Sentence-BERT.

\noindent\textbf{Serialization.} Since pre-trained language models (e.g., Sentence-BERT \cite{reimers2019sentence}) take sentences as input, we adapt them to the EM task by serializing each entity into a text sequence. We omit attribute names of the entity and concatenate all attribute values to get a text sequence. Specifically, for each entity $e=\{(\text{attr}_j, \text{val}_j) \vert 1\leq j \leq p\}$, it can be serialized as follows:


$$
serialize(e)::= \operatorname{ val_1 \, val_2 \, \cdots \, val_{p-1} \, val_p}
$$

As an example in Figure \ref{fig:example}, the entity A1 can be serialized as "apple iphone 8 plus 64gb silver".

\noindent\textbf{Representation.} Formally, given a Sentence-BERT model $\mathcal{M}$ and an input text sequence $x = \{t_1, t_2, \cdots, t_u\}$. First, apply a tokenizer to encode $x$ and feed the encoded result $w = \{v_1, v_2, \cdots, v_u\}$ to the model $\mathcal{M}$. Then a pooling method is applied for the embeddings of each token to obtain a fixed length embedding $h=pooling(\mathcal{M}(w))$ of the entity.

\subsection{Approximate Nearest Neighbor Search (ANNS)}

Nearest Neighbor Search, which aims at finding the top-k nearest objects to the query object in a reference set, is a crucial operation in various applications such as databases, computer vision, multimedia, and recommendation systems \cite{li2019approximate}. However, finding the exact nearest neighbor in high-dimensional space is generally computationally expensive. As a result, many researchers have focused on developing Approximate Nearest Neighbor Search (ANNS), which only returns sufficiently nearby objects. That is useful and efficient for several practical problems.

There are many different types of competitive methods for ANNS, such as LSH-based methods (e.g., QALSH \cite{Huang2015QueryAwareLH}), encoding-based methods (e.g., SGH \cite{Jiang2015ScalableGH}), tree-based methods (e.g., FLANN \cite{Muja2014ScalableNN}), and neighborhood-based methods (e.g., HNSW \cite{malkov2020efficient}). These methods are implemented in different ways with advantages and suitable for different scenarios.
\vspace{5mm}
\section{Method}
\label{method}

\begin{figure*}
\centering
\includegraphics[width=7in]{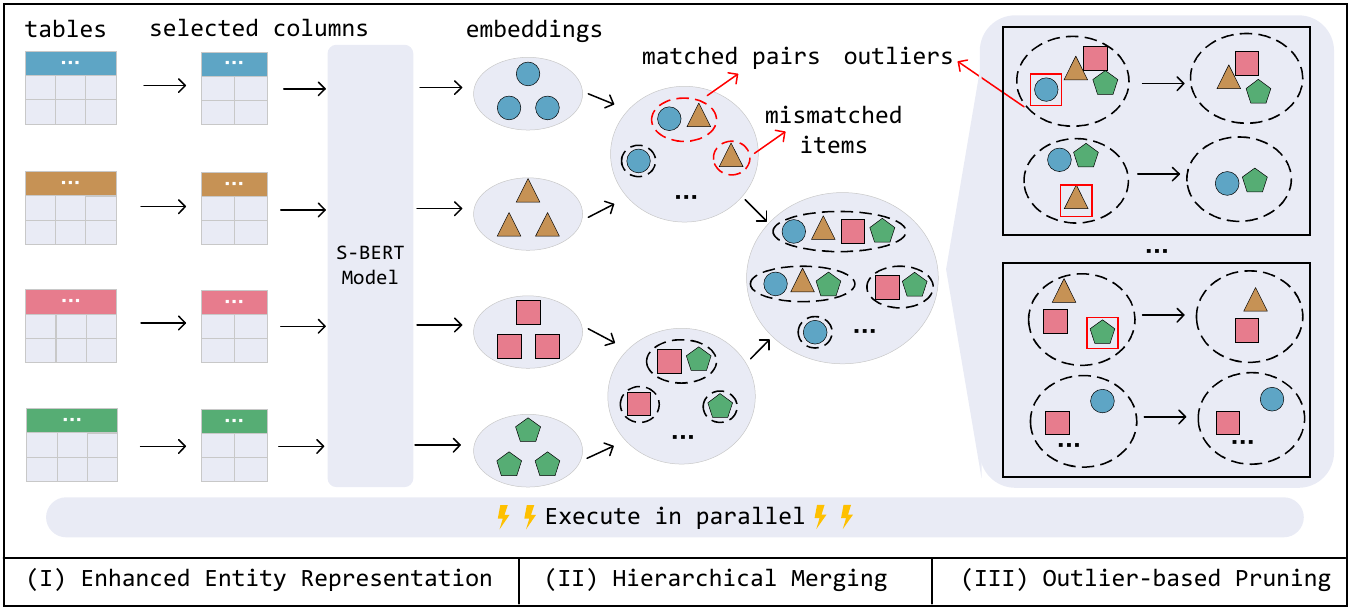}
\caption{The proposed \textsf{MultiEM} framework.}
\label{fig:framework}
\vspace{-3mm}
\end{figure*}

In this section, we present a highly
efficient and effective approach for multi-table entity matching, dubbed \textsf{MultiEM}. We first introduce the overall framework, followed by details of three modules: \emph{Enhanced Entity Representation}, \emph{Table-wise Hierarchical Merging}, and \emph{Outlier-based Pruning}. Finally, we emphasize the high parallelizability of \textsf{MultiEM} and present its parallelized version, namely \textsf{MultiEM}(parallel).

\subsection{Overview of MultiEM}
\label{method:overview}

As illustrated in Figure \ref{fig:framework}, we sequentially solve multi-table EM in three phases, i.e., \emph{representation}, \emph{merging}, and \emph{pruning}.  In the first (representation) step, all entities are serialized and converted into high-quality embeddings based on automated attribute selection. And then, in the second (merging) step, we propose a table-wise hierarchical merging algorithm to generate candidate tuples efficiently. In the last (pruning) step, we design a pruning strategy for each candidate tuple to further improve matching performance. Furthermore, \textsf{MultiEM} has a highly parallelizable design. In the merging phase, the algorithm can merge all table pairs independently. Similarly, each tuple can be pruned independently in the pruning phase without sacrificing matching performance.

\subsection{Enhanced Entity Representation}
\label{method:representation}

The quality of representations significantly impacts the effectiveness of downstream tasks, as supported by multiple studies \cite{deng2022turl,yin2020tabert,yamada2020luke}. It is especially true in unsupervised entity matching scenarios since no matched/mismatched labels exist. As mentioned before, Sentence-BERT \cite{reimers2019sentence} has demonstrated its power in sentence semantic representation, which can support many downstream tasks effectively without fine-tuning, such as retrieval and query \cite{kim2022reweighting,arabzadeh2021matches,lim2022q2r}. Therefore, we use a pre-trained Sentence-BERT \cite{reimers2019sentence} model to represent all entities without additional training costs to keep the lightweight and high efficiency of \textsf{MultiEM}, which will be analyzed in Section \ref{exp:efficiency}. However, this way may not be good enough as it considers all attributes of entities, regardless of their relevance to entity matching. Intuitively, some attributes may have no or even negative impacts on the Sentence-BERT representations.

\begin{table}
\setlength{\tabcolsep}{8pt}
\caption{Entity $e_a$, $e_b$ and $e_c$.}
\label{rep:enttiy_abc}
\begin{tabular}{c|c|c|c|c}
\toprule
 & \textbf{id} & \textbf{title} & \textbf{artist} & \textbf{album} \\ \midrule
$e_a$ & \underline{WoM14513028} & Megna's & Tim O’Brien & \textit{Chameleon} \\ \midrule
$e_b$ & \underline{WoM94369364} & Megna's & Tim O’Brien & Chameleon \\ \midrule
$e_c$ & WoM14513028 & Megna's & Tim O’Brien & \textit{The Hitmen} \\ \bottomrule
\end{tabular}
\end{table}

\begin{example} 
\label{example:entity_representation}
As illustrated in Table \ref{rep:enttiy_abc}, given one structural entity $e_a$ and replace its attribute \emph{id} and \emph{album} respectively to get two entities $e_b$ and $e_c$. And then, they are represented by the pre-trained Sentence-BERT model. It is observed that the cosine similarity of $e_a$ and $e_b$ is 0.91, and that of $e_a$ and $e_c$ is 0.79. In other words, changes made to the \emph{id} do not significantly impact the entity's embedding. This finding suggests that some attributes may not be understood well by Sentence-BERT and could potentially have a negative effect.
\end{example}


Based on this intuition, we design a general module based on automated attribute selection to enhance the entity representation. Some studies \cite{hall2003benchmarking, Paulsen2023SparklyAS} use information entropy or TF-IDF scores to measure the importance of attributes. Nevertheless, these metrics do not apply to our method, as they are based on word/phrase frequency, which differs from our objective of enhancing SentenceBERT-based representation. Example \ref{example:entity_representation} demonstrates that replacing the value of a significant attribute results in a larger change in the embedding than replacing an insignificant attribute. Leveraging this insight, we propose an algorithm to select significant and valuable attributes, including the following key steps:

\begin{enumerate}
    \item Select an attribute and shuffle the values of all entities;
    \item Generate the new embeddings with new values;
    \item Compute the distance of the new and old embeddings for each entity;
    \item Average all entities' distance as the significance score;
    \item Repeat steps 1-4 to compute significance scores for all attributes;
    \item Select more significant attributes based on a threshold $\gamma$.
\end{enumerate}

The pseudo code is shown in Algorithm \ref{algo:attribute-selection}. We optimize the raw algorithm based on random sampling (Line 2) to reduce the time overhead, as a subset of entities (with ratio $r$) is sufficient to calculate the significance scores for large-scale datasets.

\begin{algorithm}[t]
\caption{Automated Attribute Selection}
\label{algo:attribute-selection}
\LinesNumbered
\DontPrintSemicolon
    \KwIn{a set of tables $\mathcal{D}=\{E_1, E_2, \cdots, E_S\}$ with the same schema, a Sentence-BERT model $M$, hyperparameters $r$, $\gamma$}
    \KwOut{a set of selected attributes $selectedAttrs$}
    \tcp{Concatenate all tables into one table.}
    $E \gets \operatorname{concat}(E_1, E_2, \cdots, E_S$) \;
    \tcp{Sample some rows of the table.}
    $E \gets \operatorname{sample}(E)$ \;
    \tcp{Generate the initial embeddings.}
    $H \gets M(E)$ \;
    $selectedAttrs \gets []$ \;
    \tcp{Calculate the significance score of each attribute.}
    \For{$attr \in \operatorname{attributes}(E)$}{
        $E^{\prime} \gets E$ \;
        \tcp{Shuffle the values of this attribute.}
        $E^{\prime}[attr] \gets \operatorname{shuffle}(E^{\prime}[attr])$ \;
        \tcp{Generate the new embeddings.}
        $H^{\prime} \gets M(E^{\prime})$ \;
        \tcp{Calculate the mean similarity.}
        $sim \gets  \operatorname{distance}(H, H^{\prime})$\;
        \If{$sim \geq \gamma$}{
            $selectedAttrs \gets \operatorname{append}(selectedAttrs, attr)$
        }
    }
\Return{$selectedAttrs$}
\end{algorithm}

\subsection{Table-wise Hierarchical Merging}
\label{method:merging}
As mentioned in Section \ref{intro}, existing two-table EM methods \cite{li2020deep,wang2022promptem,li2021auto,tu2022domain} need to be extended to match multiple tables by pairwise matching (i.e., Figure \ref{fig:comparision}(a)) or chain matching (i.e., Figure \ref{fig:comparision}(c)). However, both approaches suffer from inefficiencies with high computational complexity (i.e., $T_{p}(S,n) \geq O(S^2 2kn\log n)$ and $T_{c}(S,n) \geq O(S^2 kn\log n)$). In addition, they need to generate all matched pairs first and then combine pairs to tuples, which is disjointed and hampered by transitive conflicts, thus affecting effectiveness. To address these issues, we propose a table-wise hierarchical merging algorithm (i.e., Figure \ref{fig:comparision}(b)) with lower time complexity (i.e., $T(S,n)=O(Skn\log S \log n)$) and can explicitly avoid the disjointed process described above. Specifically, as described in Algorithm \ref{algo:hierarchical_merging}, every two tables are merged into a single table (Line 4) hierarchically and iteratively until one table remains (Line 7) as the final result. However, how to deal with the merging of given two tables to ensure the effectiveness of matching is not trivial.


To this end, we elaborately design an ANNS-based two-table merging strategy to find some candidate tuples with its pseudo-code in Algorithm \ref{algo:merging_two_tables}. The core of this strategy is to merge the matched entities and keep the mismatched ones in the next hierarchy. It contains two steps as follows.

In the first step, we leverage HNSW \cite{malkov2020efficient}, an ANN index based on the navigable small world graphs, to balance the accuracy and efficiency. We build the indexes on every two tables and employ them to find all mutual top-K items with a distance less than $m$ as matched entity pairs $\mathcal{P}_m$  (Lines 3-5).

\begin{equation}
\label{equ:find_mutual_topk}
\mathcal{P}_{m}=\{(e,e^{\prime}) \vert e \in \operatorname{topK}(e^{\prime}) \wedge e^{\prime} \in \operatorname{topK}(e) \wedge \operatorname{dist}(e,e^{\prime}) \leq m\}
\end{equation}
Here, $e$ comes from $E_i$, $e^{\prime}$ is from $E_j$, and $\operatorname{dist}$ represents the distance function.

In the second step, we merge all the matched entity pairs based on the transitivity and retain the mismatched ones into a new table $E_{mer}$ (Lines 6-10).




\begin{algorithm}[t]
\caption{Table-wise Hierarchical Merging}
\label{algo:hierarchical_merging}
\LinesNumbered
\DontPrintSemicolon
    \KwIn{a set of tables $\mathcal{D}=\{E_1, E_2, \cdots, E_S\}$}
    \KwOut{an integrated table $E_{inte}$}
    \tcp{Iterative merging until one table remains.}
    \While {$len(\mathcal{D})>1$}{
        $\mathcal{D}_{temp} \gets$ empty list \;
        \tcp{Randomly sample two tables repeatedly.}
        \While {$E_i, E_j \gets randomSample(\mathcal{D})$}{
            \tcp{Apply the two-table merging strategy.}
            $E_{ij} \gets \operatorname{merging}(E_i, E_j)$ \;
            $\mathcal{D}_{temp} \gets \operatorname{append}(\mathcal{D}_{temp}, E_{ij})$ \;
        }
        $\mathcal{D} \gets \mathcal{D}_{temp}$ \;
    }
    $E_{inte} \gets \mathcal{D}[0]$ \;
\Return{$E_{inte}$}
\end{algorithm}


\begin{algorithm}[t]
\caption{Two-table Merging Strategy}
\label{algo:merging_two_tables}
\LinesNumbered
\DontPrintSemicolon
    \KwIn{two tables $E_i$ and $E_j$; the query hyperparameters $k$, $m$}
    \KwOut{one merged table $E_{mer}$}
    \tcp{Generate embeddings of each item.}
    $H_i \gets \operatorname{Representation}(E_i)$ \;
    $H_j \gets \operatorname{Representation}(E_j)$ \;
    \tcp{Find mutual top-K pairs by ANNS.}
    $\mathcal{P}_{ij} \gets \operatorname{ANNS}(H_i, H_j, k, m)$ \;
    $\mathcal{P}_{ji} \gets \operatorname{ANNS}(H_j, H_i, k, m)$ \;
    $\mathcal{P}_m \gets \mathcal{P}_{ij} \cap \mathcal{P}_{ji}$ \;
    \tcp{Get matched pairs of each single table.}
    $\mathcal{P}_i \gets  \operatorname{MatchedPairs}(E_i)$ \;
    $\mathcal{P}_j \gets  \operatorname{MatchedPairs}(E_j)$ \;
    \tcp{Merge based on the transitivity.}
    $P_{matched} \gets \operatorname{Merge}(\mathcal{P}_{m}, \mathcal{P}_i, \mathcal{P}_j)$ \;
    \tcp{Generate a new table.}
    $E_{mismatched} \gets \{x \vert x \in E_{i} \cup E_{j} \wedge x \notin \mathcal{P}_{matched} \}$ \;
    $E_{mer} \gets \mathcal{P}_{matched} \cup E_{mismatched}$ \;
\Return{$E_{mer}$}
\end{algorithm}

We analyzed the time complexity to demonstrate the theoretical superiority of the proposed hierarchical merging approach over pairwise matching and chain matching in efficiency.

Given $S$ tables with average size $n$. The complexities of pairwise matching, chain matching, and our proposed hierarchical merging are as follows:

\begin{lemma} Denote the time complexity of pairwise matching as  $T_{p}(S,n)$, we have 
\label{lemma:pairwise_matching}
\begin{equation}
    T_{p}(S,n) \geq O(S^2 2kn\log n).
\end{equation}
\end{lemma}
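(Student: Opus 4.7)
The plan is to decompose the pairwise-matching cost into two independent contributions — the number of two-table EM invocations it launches, and the cost of each such invocation — and then multiply.

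First, I would argue that pairwise matching, by definition (Figure~\ref{fig:comparision}(a)), runs the two-table EM subroutine on every unordered pair of input tables. With $S$ input tables this gives exactly $\binom{S}{2} = \tfrac{S(S-1)}{2} = \Omega(S^2)$ invocations, which supplies the $S^2$ factor in the claimed bound. This step is essentially a counting argument and should follow immediately from the definition of the pairwise scheme.

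Second, I would lower-bound the cost of a single invocation. In order to recover mutual top-$k$ pairs as in Equation~\eqref{equ:find_mutual_topk}, any ANNS-based two-table merging (e.g., HNSW as in \textsf{MultiEM}) must query the index in both directions: each of the $n$ entities of $E_i$ must look up its $k$ nearest neighbors inside $E_j$, and symmetrically for $E_j$ inside $E_i$. A single top-$k$ HNSW query on an index of $n$ points costs $\Omega(k\log n)$, so one directional sweep costs $\Omega(k n\log n)$ and the two directions together cost at least $2 k n\log n$. Combining the two factors,
\[
T_{p}(S,n) \;\geq\; \binom{S}{2}\cdot 2kn\log n \;=\; \Omega\!\bigl(S^{2}\cdot 2kn\log n\bigr),
\]
where the constant $\tfrac{1}{2}$ arising from $\binom{S}{2}$ is absorbed into the asymptotic notation and matches the statement of the lemma.

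The main obstacle I anticipate is the per-pair lower bound, not the counting. To cover the whole class of pairwise-matching baselines — and not only the ANNS variant used in \textsf{MultiEM} — one must argue that any reasonable two-table EM routine that returns matched pairs via bi-directional top-$k$ retrieval inherits the $\Omega(kn\log n)$ bottleneck, and in particular that no clever amortization can share work across the two query directions asymptotically. Once this per-pair cost is pinned down, the rest of the proof (pair counting, absorbing multiplicative constants, invoking $\binom{S}{2} = \Omega(S^{2})$) is routine.
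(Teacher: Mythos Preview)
Your proposal is correct and follows essentially the same route as the paper: count the $\binom{S}{2}=\Omega(S^2)$ two-table invocations, take $2kn\log n$ as the cost of a mutual top-$k$ pass, and argue that any other two-table EM routine is at least as expensive, giving the $\geq$. The only cosmetic difference is that you derive the $2kn\log n$ term from the per-query HNSW cost, whereas the paper simply asserts the mutual top-$K$ complexity and then appeals to ``other more complex EM methods'' for the inequality; your anticipated obstacle about pinning down the per-pair lower bound across all baselines is exactly the step the paper dispatches in one sentence.
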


\begin{proof}
For pairwise matching of $S$ tables, $\tbinom{S}{2}$ times of two-table EM methods are applied. Therefore, its complexity depends on the complexity of the applied two-table EM method, denoted as:

\begin{equation}
    T_{p}(S,n)=O(S^2 f(n))
\end{equation}
Here, $f(n)$ is the complexity for matching two tables.

Suppose that the mutual top-K search (i.e., with complexity $O(2kn\log n)$) is applied to match two tables. Therefore, the overall complexity of pairwise matching is computed as:

\begin{equation}
    T_{p}(S,n)=O(S^2 2kn\log n)
\end{equation}

For other more complex EM methods \cite{li2020deep,wang2022promptem,li2021auto,tu2022domain}, $f(n)$ is much higher than $O(2kn\log n)$, so the overall complexity: 

\begin{equation}
    T_{p}(S,n) \geq O(S^2 2kn\log n)
\end{equation}
\end{proof}

\begin{lemma}Denote the time complexity of chain matching as $T_{c}(S,n)$, we have 
\label{lemma:chain_matching}
\begin{equation}
    T_{c}(S,n) \geq O(S^2 kn\log n).
\end{equation}
\end{lemma}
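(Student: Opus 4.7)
The plan is to mirror the structure of the proof of Lemma~\ref{lemma:pairwise_matching} while accounting for the sequential nature of chain matching: instead of every two-table step operating on tables of the same size $n$, the accumulated ``base'' table grows by (at most) $n$ entities per round, so the later rounds dominate the total cost and produce the extra factor of $S$ relative to a single two-table match.

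First, I would decompose chain matching of $S$ tables into exactly $S-1$ sequential two-table merges. In round $i$ (for $i = 1, \ldots, S-1$), the algorithm matches the accumulated base table, of worst-case size at most $in$, against a fresh table of size $n$. Invoking the same mutual top-K primitive used in Lemma~\ref{lemma:pairwise_matching}, the cost of round $i$ is at least $\Omega(k\cdot in \cdot \log(in))$ on an adversarial instance, because merely indexing and querying over the base side dominates the two-sided search.

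Second, I would sum the per-round costs:
\[
T_{c}(S,n) \;\geq\; \sum_{i=1}^{S-1} \Omega\bigl(k\, i\, n \log(in)\bigr) \;\geq\; k n \log n \cdot \sum_{i=1}^{S-1} i \;=\; \Omega(S^{2}\, k n \log n),
\]
which matches the claimed bound. Exactly as in Lemma~\ref{lemma:pairwise_matching}, substituting any heavier two-table EM routine only enlarges the per-round cost and therefore preserves the ``$\geq$'' direction.

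The step I expect to be the main obstacle is justifying the worst-case base-table size $N_i = in$. Transitive merging can in principle collapse duplicates across rounds and keep $N_i$ strictly below $in$, which would weaken the bound. Because Lemma~\ref{lemma:chain_matching} is a lower bound, I would resolve this by exhibiting a concrete family of inputs on which no early merges occur (for example, $S$ tables whose matching attributes are pairwise disjoint until the final round), so that $N_i = in$ exactly. A secondary, minor point is the replacement of $\log(in)$ by $\log n$, which is safe because $\log(in) \geq \log n$ for every $i \geq 1$ and so discards only benign constant-factor slack.
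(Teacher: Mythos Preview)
Your proposal is correct and follows essentially the same route as the paper: decompose chain matching into $S-1$ sequential two-table merges, take the base table at round $i$ to have size $in$, use the mutual top-$K$ cost $O(kin\log n + kn\log(in))$, and observe that the $\sum_{i=1}^{S-1} i$ factor yields the $S^{2}$ term. The paper simply asserts that ``the unmatched entities are retained, leading to an increase in the size of the base table'' without constructing a hard instance, so your explicit worst-case family (pairwise-disjoint tables forcing $N_i = in$) is in fact a slightly more careful justification of the same step.
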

\begin{proof}
    For chain matching of $S$ tables, first, the base table is selected, and then the other $S-1$ tables are matched one by one. We refer the above $f(n)$ as the matching complexity of two tables. Here, $f(n)=O(kn\log n^{\prime}+kn^{\prime}\log n)$ because the sizes of the two tables are different. As matching, the unmatched entities are retained, leading to an increase in the size of the base table. Therefore, the overall complexity:

\begin{equation}
    \begin{split}
        T_c(S,n)& = \sum_{i=1}^{S-1} O(kin\log n+kn\log in) \\ 
        &=\sum_{i=1}^{S-1} O(kin\log n) + \sum_{i=1}^{S-1} O(kn\log in) \\ 
        &=O(kn(\sum_{i=1}^{S-1}i \log n+\sum_{i=1}^{S-1}\log n+\sum_{i=1}^{S-1}\log i)) \\
        &=O(S^2kn\log n + Skn\log n + kn\sum_{i=1}^{S-1}\log i) \\
        &\geq O(S^2kn\log n)
    \end{split}
\end{equation}
\end{proof}

\begin{lemma} Denote the time complexity of  hierarchical merging as $T_{c}(S,n)$, we have 
\label{lemma:hierarchical_merging}
\begin{equation}
    T(S,n)=O(Skn\log S \log n).
\end{equation}
\end{lemma}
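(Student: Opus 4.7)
The plan is to charge the cost of Algorithm \ref{algo:hierarchical_merging} to the levels of the binary merge tree it induces, then bound the work level by level and sum. Because the outer while-loop halves $|\mathcal{D}|$ on each pass, it terminates after $L = \lceil \log_{2} S \rceil$ iterations. Inside iteration $j$ (starting at $j=0$), the $S/2^{j}$ tables currently in $\mathcal{D}$ are paired and each pair is passed to the two-table merging routine of Algorithm \ref{algo:merging_two_tables}, so it is invoked $S/2^{j+1}$ times at level $j$. Following the same accounting used in Lemma \ref{lemma:pairwise_matching}, the dominant cost of a single two-table merge on tables of sizes $n_{1},n_{2}$ is the pair of mutual top-$K$ ANNS queries, which takes $O(k n_{1}\log n_{2} + k n_{2}\log n_{1})$.

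The key structural observation I would then invoke is a per-level size bound: matched entities collapse into a single record while unmatched ones are simply retained, so the total population across all tables is non-increasing from level to level and is bounded by the initial $S n$. In the worst case, a table at level $j$ therefore has size at most $2^{j} n$, and a single merge at this level costs at most $O(k\cdot 2^{j}n\log(2^{j}n))$. Multiplying by the number of merges at the level gives
\[
\frac{S}{2^{j+1}} \cdot O\!\left(k\cdot 2^{j}n\log(2^{j}n)\right) \;=\; O\!\left(Skn\log(2^{j}n)\right).
\]
Summing over $j=0,\dots,L-1$, and writing $\log(2^{j}n)=j+\log n=O(\log n)$ under the same regime in which the competing bounds of Lemmas \ref{lemma:pairwise_matching} and \ref{lemma:chain_matching} are stated (so $\log S$ does not dominate $\log n$), the total cost collapses to $O(L\cdot Skn\log n) = O(Skn\log S\log n)$, which is the claimed bound.

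The main obstacle I anticipate is justifying the per-level size bound cleanly. In a pathological instance where no entities match, the tables genuinely double at every level and the best one can extract from the same sum is $O(Skn\log S\log(Sn))$; recovering the stated $\log n$ factor requires either the collapse-of-matches invariant above or the working assumption $\log S = O(\log n)$. Once that wrinkle is settled, the remaining work is a direct geometric-sum calculation, structurally identical to (and simpler than) the sum carried out in the proof of Lemma \ref{lemma:chain_matching}, so I would model the write-up on that template.
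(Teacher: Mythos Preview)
Your proposal is correct and mirrors the paper's own proof almost exactly: the paper also counts $\log S$ hierarchy levels, bounds the table size at level $i$ by $2^{i-1}n$, charges each two-table merge $O(kn'\log n')$, and sums to obtain $O(Skn\log S(\tfrac{\log S-1}{2}+\log n))$, then invokes the assumption $S\ll n$ to simplify to $O(Skn\log S\log n)$. The ``obstacle'' you flag is precisely the one the paper handles with that final $S\ll n$ assumption, so your write-up can follow your plan without further changes.
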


\begin{proof}
    For each hierarchy $i$ from 1 to $\log S$ with $\frac{S}{2^{i-1}}$ tables, we apply the two-table merging function (i.e., Algorithm \ref{algo:merging_two_tables}) to every two tables. Therefore, the time complexity can be expressed as:

\begin{equation}
    T(S,n)=\sum_{i=1}^{\log S} \frac{S}{2^{i}} t(i)
\end{equation}
Here, $t(i)$ denotes the complexity of merging two tables at hierarchy $i$, that is, $t(i)=O(2kn^{\prime}\log n^{\prime})$, where $n'$ is the size of the tables at this hierarchy.

To be more specific, for two tables of size $n$, the size of the merged table $n^{\prime}<=2n$. In conclusion, the final time complexity can be calculated as follows:

\begin{equation}
    \begin{split}
        T(S,n)& \leq \sum_{i=1}^{\log S} \frac{S}{2^{i}} O(2k2^{i-1}n\log (2^{i-1}n))  \\
        & \leq O(Skn \sum_{i=1}^{\log S} \log (2^{i-1}n)) \\
        & \leq O(Skn (\sum_{i=1}^{\log S} \log 2^{i-1} + \sum_{i=1}^{\log S} \log n)) \\
        & \leq O(Skn (\log S \frac{\log S -1}{2} +\log S \log n)) \\
        & \leq O(Skn \log S (\frac{\log S -1}{2}+\log n))
    \end{split}
\end{equation}

Since $S \ll n$ in almost all cases, the complexity can be expressed as $O(Skn \log S \log n)$
\end{proof}

Overall, we demonstrate the efficiency and effectiveness of this hierarchical merging algorithm in two aspects. On the one hand, the theoretical time complexity of the hierarchical merging algorithm is $O(Skn \log S \log n)$, which is better than pairwise matching and chain matching. On the other hand, it is both effective and efficient in experiments, which is to be evaluated in Section \ref{experiments}.

\subsection{Density-based Pruning}
\label{method:pruning}

\begin{figure}
    \centering
    \includegraphics[width=3.3in]{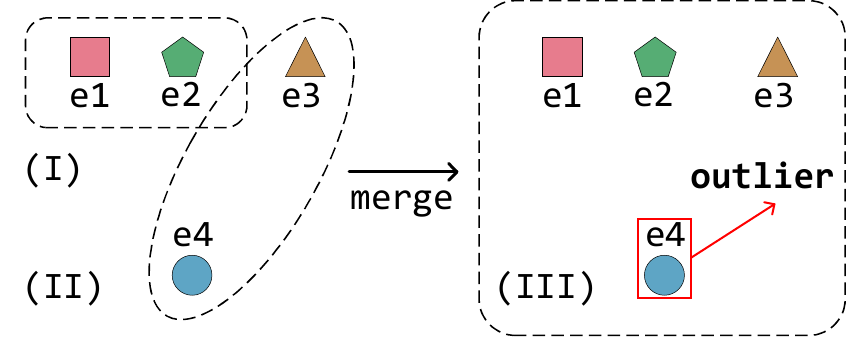}
    \caption{An intuitive example of pruning.}
    \label{fig:pruning_case}
    \vspace{-3mm}
\end{figure}

The hierarchical merging phase produces some prediction tuples in the final merged table. Nevertheless, these results are still noisy due to the locality limitations of merging. In other words, it is caused by only considering the two tables currently being merged. As shown in Figure \ref{fig:pruning_case}, first, the entities $e1$ and $e2$, $e3$ and $e4$ are merged, respectively (i.e., (\uppercase\expandafter{\romannumeral 1}) and (\uppercase\expandafter{\romannumeral 2})). Then these two pairs continue to be merged (i.e., (\uppercase\expandafter{\romannumeral 3})). However, at this point, $e4$ becomes an outlier entity in the data item $(e1, e2, e3, e4)$.


As mentioned above, we define the pruning phase as the problem of outlier detection and removal of each merging tuple. We adopt the idea of density-based \cite{Ester1996ADA,kriegel2011density} and design a density-based pruning strategy by identifying entities with different densities to improve the matching performance. Specifically, for each data item $x=\{e_1, e_2, \cdots, e_u\}$ that contains multiple entities, we first define three types of entity (i.e., \emph{core entity}, \emph{reachable entity}, and \emph{outlier entity}) as follows. 


\begin{definition}
(Core Entity). Given a data item $x=\{e_1, e_2, \cdots ,e_u\}$, an entity in it is a core entity when the indicated function $f_c(e)$ is true. Specifically, $f_c(e)$ is calculated as follows.
\end{definition}

\begin{equation}
f_c(e)=\lvert N_{\epsilon}(e,x) \rvert \geq MinPts
\end{equation}

\begin{equation}
\label{equ:dbscan_eps_core}
N_{\epsilon}(e,x)= \{e^{\prime} \vert e^{\prime} \in x \wedge \operatorname{distance}(e,e^{\prime}) \leq \epsilon \}
\end{equation}
Here, $N_{\epsilon}(e,x)$ represents the $\epsilon$-neighbor entities of $e$ in data item $x$, and $MinPts$ denotes the number of neighbors required for $e$ to become a core entity.

\begin{definition}
(Reachable Entity). A reachable entity is a non-core entity that can be reached through the core entities within its $\epsilon$-neighborhood. The formal definition of its indicator function $f_r(e)$ is as follows.
\end{definition}

\begin{equation}
f_r(e)=\lvert N_{c,\epsilon}(e,x) \rvert \geq 1
\end{equation}

\begin{equation}
\label{equ:dbscan_eps_reachable}
N_{c,\epsilon}(e,x)= \{e^{\prime} \vert e^{\prime} \in x \wedge \operatorname{distance}(e,e^{\prime}) \leq \epsilon \wedge f_{c}(e^{\prime}) \}
\end{equation}
Here, $N_{c,\epsilon}(e,x)$ denotes the core entities within the $\epsilon$-neighborhood of $e$ in data item $x$.

\begin{definition}
(Outlier Entity). An outlier entity is an entity that is neither a core entity nor a reachable entity.
\end{definition}

Next, we prune each item according to the above definitions. First, we find out the core entities (Lines 3-4), reachable entities (Lines 9-10), and outlier entities (Lines 11-12) of each item, which is described with its pseudo-code in Algorithm \ref{algo:outlier_detection}. After detecting these three kinds of entities, we remove the outlier entities in each data item and merge the other two kinds of entities (i.e., core entities and reachable entities) into a new data item. Therefore, this pruning phase can remove some errors in the merging predictions and make the results of hierarchical merging more effective, which will be evaluated in Section \ref{exp:ablation}.

Note that the pruning of each data item is independent and can be easily performed in parallel to improve efficiency. We will introduce it in detail in Section \ref{method:parallel}.

\begin{algorithm}[t]
\caption{Entity Classification for Pruning}
\label{algo:outlier_detection}
\LinesNumbered
\DontPrintSemicolon
    \KwIn{a data item $x=\{e_1, e_2, \cdots, e_u\}$; density parameters $\epsilon$ and $MinPts$}
    \KwOut{core entities $E_{c}$; reachable entities $E_{r}$; outlier entities $E_{o}$}
    \For{$e \in x$}{
        $N_{\epsilon} \gets \operatorname{Neighbors}(x,e,\epsilon)$ \;
        \If{$\lvert N_{\epsilon} \rvert \geq MinPts$}{
        $E_{c} \gets \operatorname{Append}(E_{c}, e)$ \;
        }
    }
    \For{$e \in x$}{
        $N_{\epsilon} \gets \operatorname{Neighbors}(x,e,\epsilon)$ \;
        \If{$\lvert N_{\epsilon} \rvert \geq MinPts$}{
        \Continue
        }
        \ElseIf{$N_{\epsilon} \cap E_c$}{
        $E_{r} \gets \operatorname{Append}(E_{r}, e)$ \;
        }
        \Else{
        $E_{o} \gets \operatorname{Append}(E_{o}, e)$ \;
        }
    }
\Return{$E_{c}$, $E_{r}$ and $E_{o}$}
\end{algorithm}

\subsection{MultiEM in Parallel}
\label{method:parallel}

The design of \textsf{MultiEM} enables it to be extended to the parallel mode to further boost efficiency without compromising the matching performance. Specifically, in the merging phase, each pair of tables in every hierarchy is independent and can be merged in parallel. Moreover, we apply a parallel extension in the pruning phase by partitioning tuples.

\noindent\textbf{Merging in parallel.} To perform merging in parallel, all table pairs are divided into multiple groups and assigned to different computing cores. Once the calculation of the current hierarchy is completed, the merged tables are aggregated and prepared for the subsequent merging.

\noindent\textbf{Pruning in parallel.} Similarly, in the pruning phase, each data item's pruning is independent and can be executed in parallel for greater efficiency. To achieve this, the merging predictions can be divided into multiple parts and assigned to different computational cores.
\section{Experiments}
\label{experiments}

In this section, we present an experimental evaluation of \textsf{MultiEM}, using six real-world datasets. Our evaluation aims to answer the following research questions:

\begin{itemize}[topsep=0pt,itemsep=0pt,parsep=0pt,partopsep=0pt,leftmargin=*]
    \item \textbf{RQ1}: How does \textsf{MultiEM} compare to state-of-the-art methods in matching effectiveness?
    \item \textbf{RQ2}: How efficient is \textsf{MultiEM} in terms of time and memory usage?
    \item \textbf{RQ3}: What is the influence of each key module on the effectiveness and efficiency of \textsf{MultiEM}?
    \item \textbf{RQ4}: How do different hyperparameters affect the performance of \textsf{MultiEM}?
\end{itemize}

\subsection{Experimental Setup}
\label{exp:setup}

\noindent\textbf{Datasets.} We use six public real-world datasets with various domains, sizes, and numbers of sources.
The statistics of the datasets are summarized in Table \ref{table:datasets}. The dataset Shopee comes from \cite{shopee-product-matching}, and the other five datasets are from \cite{saeedi2021matching}.

\begin{table}
\renewcommand\arraystretch{1.25}
\setlength{\tabcolsep}{4pt}
\centering
\caption{Statistics of the datasets used in our experiments.}
\label{table:datasets}
\begin{threeparttable}
\begin{tabular}{c|cccccc}
\toprule
\textbf{Name} & \textbf{Domain} & \textbf{Srcs} & \textbf{Attrs} & \textbf{Entities} & \textbf{Tuples} & \textbf{Pairs} \\ \hline
\textbf{Geo} & geography & 4 & 3 & 3,054 & 820 & 4,391 \\
\textbf{Music-20} & music & 5 & 5 & 19,375 & 5,000 & 16,250 \\
\textbf{Music-200} & music & 5 & 5 & 193,750 & 50,000 & 162,500 \\
\textbf{Music-2000} & music & 5 & 5 & 1,937,500 & 500,000  & 1,625,000\\
\textbf{Person} & person & 5 & 4 & 5,000,000 & 500,000 & 3,331,384 \\
\textbf{Shopee} & product & 20 & 1 & 32,563 & 10,962 & 54,488 \\ \bottomrule
\end{tabular}
\begin{tablenotes}
\footnotesize
\item[1] ``Tuples'' denotes the number of matched tuples; ``Pairs'' represents the number of matched pairs.
\item[2] ``Srcs'' means the number of sources, that is, the number of tables $S$ described in Section \ref{pre:multi-table}. For example, ``4'' denotes that there are four tables in the Geo dataset.
\end{tablenotes}
\end{threeparttable}
\end{table}

\noindent\textbf{Baselines.} We compare \textsf{MultiEM} with five baselines, including supervised and semi-supervised methods for \emph{two-table entity matching} (i.e., \textsf{Ditto} and \textsf{PromptEM}), a SOTA unsupervised approach for \emph{two-table entity matching} (i.e., \textsf{AutoFuzzyJoin}), and methods designed for \emph{multi-table entity matching} (i.e., \textsf{ALMSER-GB} and \textsf{MSCD-HAC}). Note that for two-table EM methods, we apply both pairwise matching and chain matching for them. And then evaluate them in the multi-table EM settings following Algorithm \ref{algo:evaluation_two_table}.

\begin{itemize}[topsep=0pt,itemsep=0pt,parsep=0pt,partopsep=0pt,leftmargin=*]
    \item \textsf{PromptEM} \cite{wang2022promptem} is a prompt-tuning based approach for low-resource generalized entity matching.
    \item \textsf{Ditto} \cite{li2020deep} is a supervised EM approach that fine-tunes a pre-trained language model with labeled data.
    \item \textsf{AutoFuzzyJoin} \cite{li2021auto} is an unsupervised fuzzy join framework that can be used for two-table entity matching.
    \item \textsf{ALMSER-GB} \cite{primpeli2021graph} is a graph-boosted active learning method for
multi-source entity resolution.
    \item \textsf{MSCD-HAC} \cite{saeedi2021matching} is an extended hierarchical agglomerative clustering algorithm for clustering entities from multiple sources.
\end{itemize}

\noindent\textbf{Implementation details.} We implement \textsf{MultiEM} in Python, the Sentence-Transformers\footnote{\url{https://www.sbert.net}} library, the hnswlib\footnote{\url{https://github.com/nmslib/hnswlib}} library, and the scikit-learn\footnote{\url{https://github.com/scikit-learn/scikit-learn}} library. We use all-MiniLM-L12-v2\footnote{\url{https://huggingface.co/sentence-transformers/all-MiniLM-L12-v2}} with mean-pooling as the backbone structure of Sentence-BERT in all our experiments. It is trained using more than 1 billion sentence pairs from multiple datasets and maps a sentence to a 384-dim dense vector. We use HNSW algorithm \cite{malkov2020efficient} in the merging phase. We follow the efficient implementation of DBSCAN~\cite{Ester1996ADA} in scikit-learn library\footnote{\url{https://scikit-learn.org/stable/modules/generated/sklearn.cluster.DBSCAN.html}} for the pruning phase. For the parallel extension, we use the Joblib\footnote{\url{https://github.com/joblib/joblib}} as the underlying parallel framework. In all our experiments, the maximum sequence length is set to 64; $k$ is set to 1; $MinPts$ is set to 2; $r$ is set to 0.05 for the large dataset with more than 5 million entities (i.e., Person) and set to 0.2 for other datasets. We tune other hyper-parameters by doing a grid search and selecting the one with the best performance. Specifically, $\epsilon$ is selected from \{0.8, 1.0\}, and $m$ is selected from \{0.05, 0.2, 0.35, 0.5\}, $\gamma$ is selected from \{0.8, 0.9\}. We use the cosine distance as the metric in the merging phase and use the euclidean distance in the pruning phase. All the experiments are conducted on a machine with an Intel Xeon Silver 4216 CPU, an NVIDIA A100 GPU, and 500GB memory. The code and all datasets are available at \url{https://github.com/ZJU-DAILY/MultiEM}. We implement each baseline as follows. 

\begin{itemize}[leftmargin=*]
    \item \textsf{PromptEM} \cite{wang2022promptem}: We implement this approach according to the original paper and public code\footnote{\url{https://github.com/ZJU-DAILY/PromptEM}}.
    \item \textsf{Ditto} \cite{li2020deep}: We implement this method according to the original paper and public code\footnote{\url{https://github.com/megagonlabs/ditto}}.
    \item \textsf{AutoFJ} \cite{li2021auto}: We implement this method following the origin paper and public code\footnote{\url{https://github.com/chu-data-lab/AutomaticFuzzyJoin}}.
    \item \textsf{ALMSER-GB} \cite{primpeli2021graph}: We implement this method according to the origin paper and public code\footnote{\url{https://github.com/wbsg-uni-mannheim/ALMSER-GB}}.
    \item \textsf{MSCD-HAC} \cite{saeedi2021matching}: We implement this method described in the original paper.
\end{itemize}

\noindent\textbf{Evaluation metrics.} Following most related studies \cite{wang2022promptem,li2020deep,tu2022domain}, we use precision (P), recall (R), and F1-score (F1) as the primary metrics. Note that in our evaluation, a prediction tuple is considered correct only if it matches the truth tuple exactly. Since most baseline methods use entity pair as the evaluation unit, for a fair comparison, we use the F1-score for pairwise matching (pair-F1) as an auxiliary metric to evaluate the matching performance for another aspect.

\begin{example}
Given a truth tuple $t=(1, 2, 3)$, while a prediction tuple $p=(1, 2, 4)$. When evaluated with F1, it is a wrong prediction. Nevertheless, when evaluated with pair-F1, tuples $t$ and $p$ are parsed into pairs $\{(1, 2), (1, 3), (2, 3)\}$ and $\{(1, 2), (1, 4), (2, 4)\}$ respectively. Since the $(1, 2)$ is a truth pair, the precision and recall are both $\frac{1}{3}$, and the pair-F1 score is calculated as $\frac{1}{3}$. In general, F1-score is a strict metric, while pair-F1 is looser.
\end{example}

For supervised/semi-supervised methods (i.e., \textsf{PromptEM}, \textsf{Ditto}, and \textsf{ALMSER-GB}) that require training samples, we randomly sample 5\% of the ground truth as the train set and 5\% as the valid set. For the test set, we use the entire ground truth and randomly sample $P$ mismatched pairs for each pair for comprehensive evaluation. $P$ is set to 100 for small datasets (i.e., Geo, Music-20, Shopee) and 500 for large datasets (i.e., Music-200, Music-2000, Person).

Since the prediction pairs from two-table EM approaches can not be directly used for evaluation in the multi-table EM setting, we devised an extension algorithm for converting pairs into tuples with its pseudo code presented in Algorithm \ref{algo:evaluation_two_table}.

\begin{algorithm}[t]
\caption{Extension for Pairs to Tuples}
\label{algo:evaluation_two_table}
\LinesNumbered
\DontPrintSemicolon
\label{algorithmn}
    \KwIn{pairs $\mathcal{P}$, entity set $\mathbf{E}$}
    \KwOut{tuples $\mathcal{T}$ in the multi-table EM setting}
    $\mathcal{T} \gets$ empty set\;
    \For {$e \in E$}{
        \tcp{Find all entities in $\mathcal{P}$ that match $e$.}
        $E^{\prime} \gets find\_matched\_entities(\mathcal{P}, e)$ \;
        \tcp{Construct to a tuple.}
        $tuple \gets e \cup E^{\prime}$ \;
        $\mathcal{T} \gets \operatorname{Add}(\mathcal{T}, tuple)$ \;
    }
\Return{$\mathcal{T}$}
\end{algorithm}

\subsection{Experiments on Effectiveness (RQ1)}
\label{exp:effectiveness}

\begin{table*}\small
\begin{threeparttable}
\setlength{\tabcolsep}{1.5pt}
\centering
\caption{Matching performance of all the methods.}
\label{table:main_result}
\begin{tabular}
{c|cccc|cccc|cccc|cccc|cccc|cccc}
\toprule
\multirow{2}{*}{\textbf{Methods}} & \multicolumn{4}{c|}{\textbf{Geo}} & \multicolumn{4}{c|}{\textbf{Music-20}} & \multicolumn{4}{c|}{\textbf{Music-200}} & \multicolumn{4}{c|}{\textbf{Music-2000}} & \multicolumn{4}{c|}{\textbf{Person}} & \multicolumn{4}{c}{\textbf{Shopee}} \\ \cmidrule{2-25}
 & P & R & F1 & p-F1 & P & R & F1 & p-F1 & P & R & F1 & p-F1 & P & R & F1 & p-F1 & P & R & F1 & p-F1 & P & R & F1 & p-F1 \\ 
 \midrule
\textsf{PromptEM (pw)} & 13.0 & 48.2 & 20.4 & 55.2 & 31.7 & 83.0 & 45.9 & 70.7 & 21.9 & 68.6 & 33.2 & 55.3 & \textbackslash{} & \textbackslash{} & \textbackslash{} & \textbackslash{} & \textbackslash{} & \textbackslash{} & \textbackslash{} & \textbackslash{} & 0.2 & 0.5 & 0.2 & 9.6 \\
\textsf{Ditto (pw)} & 11.2 & 41.5 & 17.6 & 30.4 & 39.4 & 85.5 & 53.9 & 70.9 & 28.7 & 75.6 & 41.6 & 56.1 & \textbackslash{} & \textbackslash{} & \textbackslash{} & \textbackslash{} & \textbackslash{} & \textbackslash{} & \textbackslash{} & \textbackslash{} & 0.0 & 0.0 & 0.0 & 2.0 \\
\textsf{AutoFJ (pw)} & 95.1 & 42.4 & 58.6 & 89.4 & 70.7 & 4.5 & 8.4 & 56.6 & - & - & - & - & - & - & - & - & - & - & - & - & 71.1 & 10.8 & 18.7 & \textbf{45.0} \\  \midrule
\textsf{PromptEM (c)} & 33.7 & 88.0 & 48.7 & 85.3 & 41.1 & 92.3 & 56.9 & 78.9 & 29.4 & 80.4 & 43.0 & 64.3 & \textbackslash{} & \textbackslash{} & \textbackslash{} & \textbackslash{} & \textbackslash{} & \textbackslash{} & \textbackslash{} & \textbackslash{} & 2.2 & 6.6 & 3.3 & 22.0 \\
\textsf{Ditto (c)} & 24.0 & 76.6 & 36.5 & 65.7 & 48.4 & 91.5 & 63.3 & 76.8 & 40.9 & 87.8 & 55.8 & 72.6 & \textbackslash{} & \textbackslash{} & \textbackslash{} & \textbackslash{} & \textbackslash{} & \textbackslash{} & \textbackslash{} & \textbackslash{} & 3.4 & 10.0 & 5.1 & 19.6 \\
\textsf{AutoFJ (c)} & 52.3 & 50.0 & 51.1 & 56.8 & 30.3 & 23.4 & 26.4 & 50.4 & - & - & - & - & - & - & - & - & - & - & - & - & 45.9 & 24.2 & \textbf{31.6} & 31.1 \\  \midrule
\textsf{ALMSER-GB} & 34.0 & 85.4 & 48.6 & 83.8 & 48.6 & 91.5 & 63.5 & 87.0 & \textbackslash{} & \textbackslash{} & \textbackslash{} & \textbackslash{} & \textbackslash{} & \textbackslash{} & \textbackslash{} & \textbackslash{} & \textbackslash{} & \textbackslash{} & \textbackslash{} & \textbackslash{} & 7.9 & 22.3 & 11.7 & 36.4 \\
\textsf{MSCD-HAC} & 39.0 & 91.0 & 54.6 & 90.9 & \textbackslash{} & \textbackslash{} & \textbackslash{} & \textbackslash{} & \textbackslash{} & \textbackslash{} & \textbackslash{} & \textbackslash{} & \textbackslash{} & \textbackslash{} & \textbackslash{} & \textbackslash{} & \textbackslash{} & \textbackslash{} & \textbackslash{} & \textbackslash{} & \textbackslash{} & \textbackslash{} & \textbackslash{} & \textbackslash{} \\  \midrule
\textsf{MultiEM} & 90.5 & 91.4 & \textbf{90.9} & \textbf{97.3} & 91.1 & 86.2 & \textbf{88.6} & \textbf{95.3} & 83.7 & 80.8 & \textbf{82.2} & \textbf{92.3} & 69.4 & 68.1 & \textbf{68.7} & \textbf{85.2} & 33.6 & 39.9 & \textbf{36.5} & \textbf{73.6} & 34.5 & 21.1 & 26.2 & 43.5 \\
\textsf{w/o EER} & 65.1 & 64.3 & 64.7 & 89.5 & 88.3 & 85.3 & 86.8 & 94.2 & 79.4 & 76.6 & 78.0 & 89.9 & 65.1 & 60.6 & 62.8 & 81.3 & 33.6 & 39.9 & \textbf{36.5} & \textbf{73.6} & 34.5 & 21.1 & 26.2 & 43.5 \\
\textsf{w/o DP} & 90.5 & 91.4 & \textbf{90.9} & \textbf{97.3} & 82.0 & 82.8 & 82.4 & 92.7 & 75.5 & 77.2 & 76.4 & 89.8 & 65.6 & 66.4 & 66.0 & 84.1 & 33.6 & 39.9 & \textbf{36.5} & \textbf{73.6} & 32.9 & 21.1 & 25.7 & 42.9 \\ \bottomrule
\end{tabular}
\begin{tablenotes}
\footnotesize
\item[1] Due to the space limitation, we use ``p-F1'' to represent ``pair-F1'' described in Section \ref{exp:setup}. And we use the suffix ``(pw)'' to indicate the pairwise matching and the suffix ``(c)'' to indicate the chain matching for two-table EM methods.
\item[2] The best ``F1'' and ``pair-F1'' are in \textbf{bold}.
\item[3] The symbol ``-''  means that the method is \textbf{NOT} able to perform due to the memory limitation in our experimental settings.
\item[4] The symbol ``$\backslash$'' denotes that the method can \textbf{NOT} produce any result after \textbf{7} days in our experimental settings.
\end{tablenotes}
\end{threeparttable}
\end{table*}

We first evaluate the matching performance of \textsf{MultiEM} compared to the baselines. The results of all methods across the six datasets are reported in Table \ref{table:main_result}.

\noindent\textbf{MultiEM vs. two-table EM methods.} As observed, \textsf{MultiEM} significantly outperforms all two-table baselines on most datasets. On datasets Geo, Music-20, and Shopee, the average F1 score of \textsf{MultiEM} is +21.7 over the respective best two-table EM competitor (i.e., \textsf{AutoFJ (p), \textsf{Ditto (c)}}, and \textsf{PromptEM (c)}). \textsf{PromptEM} and \textsf{Ditto} perform relatively well on some datasets (e.g., 63.3 of F1 score on Music-20 and 85.3 of pair-F1 on Geo ) because they utilize the pre-trained language models, which capture better entity representation than other baselines. \textsf{AutoFJ} also achieves promising results on some datasets (e.g., Geo and Shopee) while poorly on some other datasets, and even cannot produce any results on large datasets due to memory constraints. However, these two-table EM methods need to be extended by pairwise matching or chain matching, which explicitly encounter transitive conflicts (described in Challenge \uppercase\expandafter{\romannumeral 2}), hindering the effectiveness of these methods. By comparison, the extensions of chain matching perform better than pairwise matching in most cases. Specifically, the F1 score of the former is +11.2 over the latter, and the pair-F1 is +7.5.  The main reason is that the chain matching may output fewer matched pairs, and thus fewer transitive conflicts. Furthermore, we observe that for \textsf{PromptEM} and \textsf{Ditto}, the recall substantially exceeds the precision on all datasets. That is because we simplify the evaluation by taking all ground truth pairs as a part of the candidate entity pairs of these two methods.


\noindent\textbf{MultiEM vs. multi-table EM methods.} Although those baselines (i.e., \textsf{ALMSER-GB} and \textsf{MSCD-HAC}) make some designs for multi-table EM and achieve relatively considerable results on some datasets (e.g., \textsf{MSCD-HAC} scores pair-F1 of 90.9 on Geo, \textsf{ALMSER-GB} scores pair-F1 of 36.4 on Shopee), they perform poorly in terms of efficiency. \textsf{MSCD-HAC} cannot produce valid results on most datasets, and \textsf{ALMSER-GB} cannot either on large-scale datasets. In addition, \textsf{ALMSER-GB} and \textsf{MSCD-HAC} regard multi-table EM as a pairwise matching task, so they could perform better on the pair-F1 score than the F1 score.

Overall, the proposed \textsf{MultiEM} outperforms baselines in matching effectiveness across six benchmark datasets. Specifically, on four comparable datasets (i.e., Geo, Music-20, Music-200, Shopee), \textsf{MultiEM} scores an average F1 of 72.0, which is +37.0 relatively over competitive baselines, and scores an average pair-F1 of 82.1, which is +25.2 over baselines. For the two large datasets Music-2000 and Person, \textsf{MultiEM} scores an average F1 of 52.6 and pair-F1 of 79.4. However, no baselines can generate valid results due to time or memory constraints. The excellent matching performance demonstrates the effectiveness of our proposed \textsf{MultiEM}.

For dataset Shopee, we observe that all baselines and our proposed \textsf{MultiEM} have low F1 and pair-F1 scores (i.e., the maximum is 31.6 and 45.0, respectively). The main reason is that this dataset includes many similar and confusing product descriptions, so it is difficult. For example, given two different products with descriptions ``Paket Senter mini XPE+COB led Q5 zoom usb charger'' and ``Senter Mini XPE+Led COB Cas USB Zoom Police U3''. Their cosine similarity is 0.77 based on Sentence-BERT and 0.71 based on Glove \cite{pennington2014glove} embeddings. In other words, most representation models confuse them without supervised guidance. More specifically, whether it is a supervised or unsupervised method, whether it is a two-table or a multi-table EM method, one of the most critical steps is representing the entities. High-quality representations will affect the effectiveness of the downstream task~\cite{deng2022turl,yin2020tabert,yamada2020luke}. Currently, the approaches for entity representation are mainly based on word embedding \cite{mudgal2018deep}, pre-trained language models \cite{li2020deep}, or integrated with additional information \cite{ge2021collaborem} (e.g., graph, external knowledge). These methods are still flawed and perform poorly in the face of indistinguishable entity text.

\subsection{Experiments on Efficiency (RQ2)}
\label{exp:efficiency}

\begin{table}\small
\caption{Running time comparison.}
\label{table:time}
\setlength{\tabcolsep}{2pt}
\begin{threeparttable}
\begin{tabular}{c|c|c|c|c|c|c}
\toprule
\textbf{Methods} & \textbf{Geo} & \begin{tabular}[c]{@{}c@{}}\textbf{Music-}\\\textbf{20}\end{tabular} & \begin{tabular}[c]{@{}c@{}}\textbf{Music-}\\\textbf{200}\end{tabular} & \begin{tabular}[c]{@{}c@{}}\textbf{Music-}\\\textbf{2000}\end{tabular} & \textbf{Person} & \textbf{Shopee} \\ 
\midrule
\textsf{PromptEM (pw)} & 12.7m & 50.5m & 38.4h & $\backslash$ & $\backslash$ & 3.0h \\
\textsf{Ditto (pw)} & 3.5m & 31.4m & 14.4h & $\backslash$ & $\backslash$ & 1.6h \\
\textsf{AutoFJ (pw)} & 8.9m & 3.8h & - & - & - & 3.1h \\ 
\midrule
\textsf{PromptEM (c)} & 12.1m & 49.8m & 39.4h & $\backslash$ & $\backslash$ & 2.6h \\
\textsf{Ditto (c)} & 3.4m & 31.2m & 14.5h & $\backslash$ & $\backslash$ & 1.5h \\
\textsf{AutoFJ (c)} & 9.9m & 1.4h & - & - & - & 1.2h \\ 
\midrule
\textsf{ALMSER-GB} & 5.1m & 21.0m & $\backslash$ & $\backslash$ & $\backslash$ & 26.8m \\
\textsf{MSCD-HAC} & 1.5h & $\backslash$ & $\backslash$ & $\backslash$ & $\backslash$ & $\backslash$ \\
\midrule
\textsf{MultiEM} & \textbf{6.1s} & 34.6s & 6.3m & 1.3h & 1.8h & 42.9s \\
\textsf{MultiEM (parallel)} & 10.7s & \textbf{31.0s} & \textbf{4.2m} & \textbf{49.1m} & \textbf{52.9m} & \textbf{31.8s} \\ 
\bottomrule
\end{tabular}
\begin{tablenotes}
\footnotesize
\item[1] ``s'' denotes seconds, ``m'' means minutes, ``h'' denotes hours.
\item[2] The minimum running time is in \textbf{bold}.
\item[3] Due to the space limitation, we use the suffix ``(pw)'' to indicate the pairwise matching and the suffix ``(c)'' to indicate the chain matching for two-table EM methods.
\item[4] The symbol ``-''  means that the method is \textbf{NOT} able to perform due to the memory limitation in our experimental settings.
\item[5] The symbol ``$\backslash$'' denotes that the method can \textbf{NOT} produce any result after \textbf{7} days in our experimental settings.
\end{tablenotes}
\end{threeparttable}
\end{table}

We further explore the efficiency of our proposed \textsf{MultiEM} in terms of running time and memory usage, and the results are presented in Table \ref{table:time}, Table \ref{table:memory}.

\noindent\textbf{Comparison of running time.} As observed, \textsf{MultiEM} and its parallelized variant \textsf{MultiEM} (parallel) show substantial advantages in terms of running time. \textsf{MultiEM} achieves state-of-the-art matching results with nearly 170x speed-up on average compared to competitors and over 190x speed-up for \textsf{MultiEM} (parallel). On datasets Geo Music-20, and Shopee, the running time of \textsf{MultiEM} is at the second level, while other baselines are at the minute or even hour level. On large-scale datasets such as Music-2000 and Person, most baselines cannot produce any results due to the time limitation, which highlights the high efficiency of \textsf{MultiEM}. Generally, two-table EM methods (i.e., \textsf{PromptEM}, \textsf{Ditto}, and \textsf{AutoFJ}) run long as they are not explicitly designed for multi-table EM, requiring pairwise or chain matching extensions. Among them, \textsf{Ditto} runs long because it needs to fine-tune the pre-trained language model. And \textsf{PromptEM} also takes longer to run as it needs to handle the prompt-tuning template, which is more complex than vanilla fine-tuning (i.e., \textsf{Ditto}). In addition, it is observed that the running time of chain matching is near to pairwise matching, thereby also inefficient. As said before, this is because the size of the base table increases with the chain matching, which significantly affects the matching efficiency. \textsf{MSCD-HAC} is based on agglomerative hierarchical clustering, and its time complexity is too high, i.e., $O(\lvert E\rvert ^3)$, where $E$ represents all entities. Therefore, \textsf{MSCD-HAC} cannot support large-scale datasets.
\textsf{ALMSER-GB} applies active learning and boosted graph learning, which is ahead of other baselines in the running time, but still cannot handle some large datasets.

\noindent\textbf{Comparison of memory usage.} In terms of memory usage, \textsf{MultiEM} is relatively low on most datasets, including some large-scale datasets. 
The reason is that \textsf{MultiEM} is based on the approximate k-nearest neighbor (ANN) and does not depend on any large or complex models, which usually occur in lots of memory. For methods such as \textsf{PromptEM} and \textsf{Ditto} that rely on pre-trained language models, their memory usage is the highest and generally stable regardless of dataset size. \textsf{AutoFJ} also has low memory usage on small datasets. However, the blocking phase on large datasets causes a surge in memory usage, so it cannot produce valid results due to memory limitations. \textsf{ALMSER-GB} needs to store and process the entity similarity graphs, so the memory usage of it varies due to the number of entities.

\begin{table}\small
\caption{Memory usage comparison.}
\label{table:memory}
\setlength{\tabcolsep}{2pt}
\begin{threeparttable}
\begin{tabular}{c|c|c|c|c|c|c}
\toprule
\textbf{Methods} & \textbf{Geo} & \begin{tabular}[c]{@{}c@{}}\textbf{Music-}\\\textbf{20}\end{tabular} & \begin{tabular}[c]{@{}c@{}}\textbf{Music-}\\\textbf{200}\end{tabular} & \begin{tabular}[c]{@{}c@{}}\textbf{Music-}\\\textbf{2000}\end{tabular} & \textbf{Person} & \textbf{Shopee} \\ 
\midrule
\textsf{PromptEM (pw)} & 43.9G & 43.9G & 65.5G & $\backslash$ & $\backslash$ & 39.2G \\
\textsf{Ditto (pw)} & 30.1G & 41.6G & 44.1G & $\backslash$ & $\backslash$ & 68.6G \\
\textsf{AutoFJ (pw)} & 5.1G & \textbf{6.7G} & - & - & - & \textbf{3.0G}
\\
\midrule
\textsf{PromptEM (c)} & 43.4G & 44.4G & 65.5G & $\backslash$ & $\backslash$ & 39.5G \\
\textsf{Ditto (c)} & 30.4G & 40.7G & 44.3G & $\backslash$ & $\backslash$ & 68.5G \\
\textsf{AutoFJ (c)} & 5.3G & 7.0G & - & - & - & \textbf{3.0G}
\\
\midrule
\textsf{ALMSER-GB} & 3.8G & 15.7G & $\backslash$ & $\backslash$ & $\backslash$ & 9.9G \\
\textsf{MSCD-HAC} & \textbf{2.1G} & $\backslash$ & $\backslash$ & $\backslash$ & $\backslash$ & $\backslash$ \\
\midrule
\textsf{MultiEM} & 16.3G & 17.5G & \textbf{17.8G} & \textbf{17.5G} & \textbf{18.2G} & 17.5G \\
\textsf{MultiEM (parallel)} & 21.5G & 22.1G & 23.3G & 22.0G & 24.7G & 22.7G \\
\bottomrule
\end{tabular}
\begin{tablenotes}
\footnotesize
\item[1] ``G'' denotes gigabytes.
\item[2] The minimum memory usage is in \textbf{bold}. 
\item[3] Due to the space limitation, we use the suffix ``(pw)'' to indicate the pairwise matching and the suffix ``(c)'' to indicate the chain matching for two-table EM methods.
\item[4] For \textsf{PromptEM} and \textsf{Ditto}, we report the sum of memory and GPU memory.
\item[5] The symbol ``-''  means that the method is \textbf{NOT} able to perform due to the memory limitation in our experimental settings.
\item[6] The symbol ``$\backslash$'' denotes that the method can \textbf{NOT} produce any result after \textbf{7} days in our experimental settings.
\end{tablenotes}
\end{threeparttable}
\end{table}

\subsection{Ablation Study (RQ3)}
\label{exp:ablation}

Next, we study the effectiveness and efficiency of each key module of \textsf{MultiEM}. Specifically, we analyze the effectiveness of the \emph{enhanced entity representation (EER)} and \emph{density-based pruning (DP)} modules by comparing \textsf{MultiEM} with its variants (i.e., \textsf{MultiEM w/o EER} and \textsf{MultiEM w/o DP}). The results are listed in Table \ref{table:main_result}. Furthermore, we also analyze the impact of parallel extension on overall efficiency by comparing running time and memory usage. The results are listed in Table \ref{table:time} and Table \ref{table:memory}. Finally, we evaluate the contribution of each module of \textsf{MultiEM} in terms of the running time. The results are shown in Figure \ref{fig:time_of_module}.

\begin{table}
\renewcommand\arraystretch{1.25}
\setlength{\tabcolsep}{10pt}
\caption{automated selected attributes.}
\label{table:selections}
\begin{tabular}{l|l|l}
\toprule
\textbf{Dataset} & \textbf{All attributes} & \textbf{Selected attributes} \\ \hline
\textbf{Geo} & name, longtitude, latitude & name \\ \hline
\textbf{Music-20} & \begin{tabular}[c]{@{}l@{}}id, number, title, length, \\ artist, album, year, language\end{tabular} & title, artist, album \\ \hline
\textbf{Music-200} & \begin{tabular}[c]{@{}l@{}}id, number, title, length, \\ artist, album, year, language\end{tabular} & title, artist, album \\ \hline
\textbf{Music-2000} & \begin{tabular}[c]{@{}l@{}}id, number, title, length, \\ artist, album, year, language\end{tabular} & title, artist, album \\ \hline
\textbf{Person} & \begin{tabular}[c]{@{}l@{}} givenname,surname, \\ suburb,postcode\end{tabular} & \begin{tabular}[c]{@{}l@{}} givenname,surname, \\ suburb,postcode\end{tabular} \\ \hline
\textbf{Shopee} & title & title \\
\bottomrule
\end{tabular}
\end{table}

\noindent\textbf{\textsf{MultiEM} vs. \textsf{MultiEM w/o EER}.} \textsf{MultiEM w/o EER} means that we only use the pre-trained Sentence-BERT embeddings as the final representation of entities. As demonstrated by the experimental results, the absence of the enhanced entity representation significantly decreases the matching performance, resulting in an average F1 score decrease of 6.4\% and an average pair-F1 decrease of 2.5\%. These findings suggest that the proposed enhanced entity representation module improves the entity representation quality and thus boosts the matching performance. Moreover, these results also indicate the importance of entity representation in the EM task. In addition, the selected attributes by the EER module, which are consistent with the judgments obtained by domain experts after analyzing the data, are shown in Table \ref{table:selections}.

\noindent\textbf{\textsf{MultiEM} vs. \textsf{MultiEM w/o DP}.} \textsf{MultiEM w/o DP} denotes that we only use the predictions of the merging phase as the final results. It is observed that the pruning phase contributes to performance gain in most cases. 
The F1 score drops by 2.4\%, and the pair-F1 drops by 1.1\% on average without the pruning module. 
This confirms that the proposed density-based pruning module can help further refine the predictions of the merging phase to produce more precise matching results.

\noindent\textbf{\textsf{MultiEM} vs. \textsf{MultiEM} (parallel).} We extended \textsf{MultiEM} with parallelization to further improve its efficiency. Our observations show that the parallel strategy significantly reduces the running time without compromising the matching performance. This is attributed to the design of \textsf{MultiEM}, where the merging of each table pair and the pruning of each tuple are independent processes. Moreover, memory usage also increases as parallel processes require additional resources for maintenance. As shown in Table \ref{table:time}, and Table \ref{table:memory}, the average running time is reduced by 32.2\%, and the average memory usage is increased by 29.7\% for all datasets except Geo. As described above, the dataset Geo's size is relatively small, so it is fast enough for the non-parallel \textsf{MultiEM}, while the parallel strategy will bring additional overhead.

\noindent\textbf{Efficiency of each module.} As shown in Figure \ref{fig:time_of_module}, merging is the most time-consuming step in most cases, which takes about 37.3\% on average of the overall pipeline, while 29.0\%, 13.5\%, and 20.2\% for the other three modules, respectively. In addition, the parallel strategy significantly improves the efficiency of the merging and the pruning phase. The running time drops by 13.8\% and 50.0\% on average of all datasets except Geo.

\begin{figure}
    \centering
    \subfigure[Geo]{
    \includegraphics[width=1.55in]{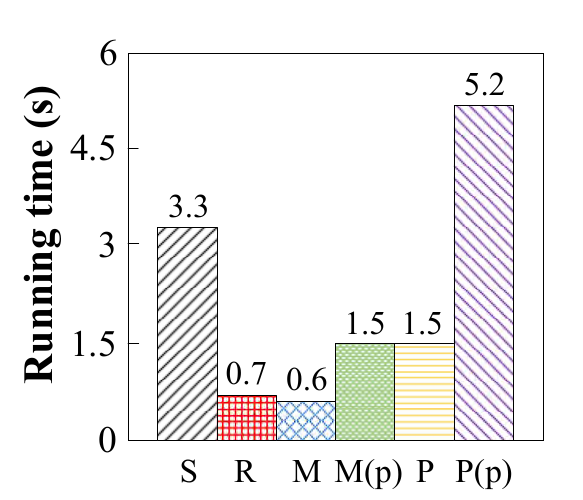}
    }
    \subfigure[Music-20]{
    \includegraphics[width=1.55in]{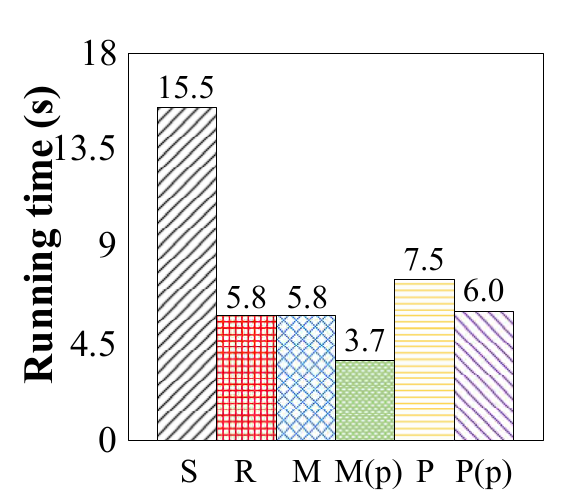}
    }\\
    \subfigure[Music-200]{
    \includegraphics[width=1.55in]{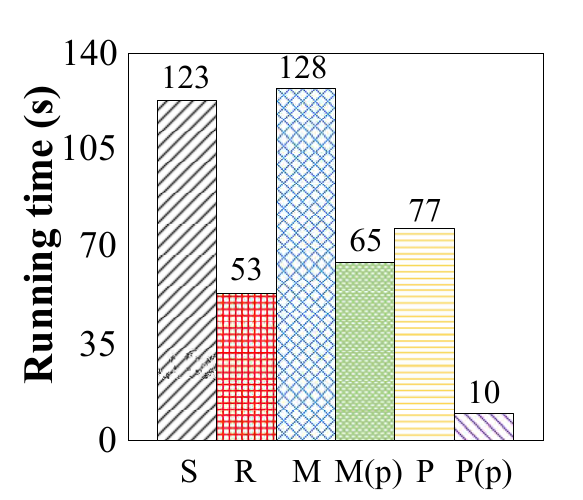}
    }
    \subfigure[Music-2000]{
    \includegraphics[width=1.55in]{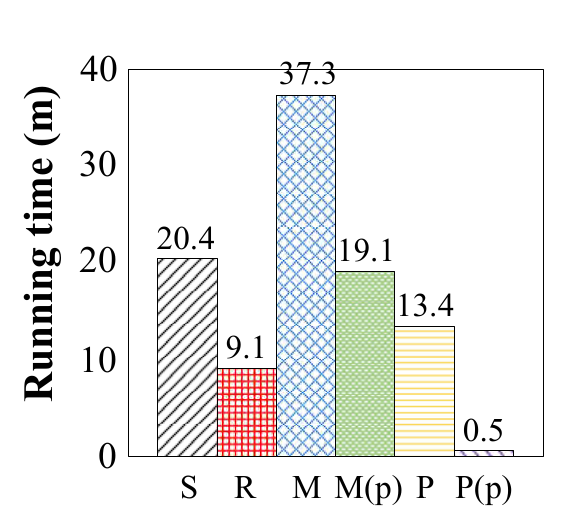}
    }\\
    \subfigure[Person]{
    \includegraphics[width=1.55in]{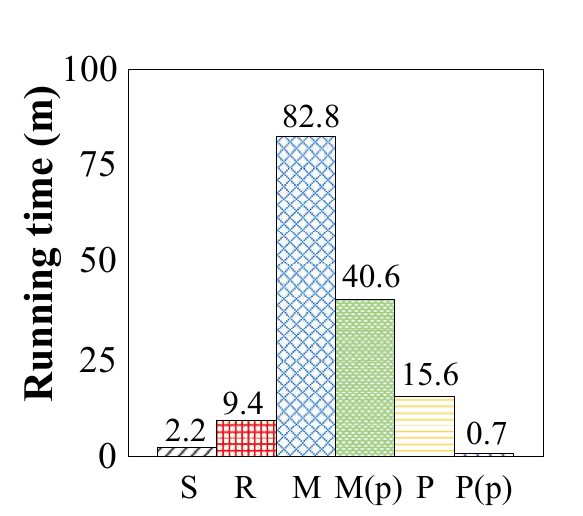}
    }
    \subfigure[Shopee]{
    \includegraphics[width=1.55in]{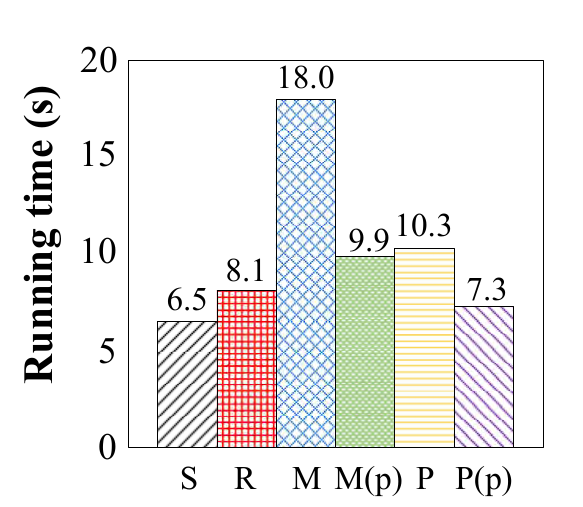}
    }\\
    \caption{Running time of each  key module of \textsf{MultiEM}. Due to the space limitation, we use abbreviations. ``S'' represents \emph{automated attribute selection},  ``R'' denotes \emph{entity representation}, ``M'' represents \emph{merging}, ``P'' denotes \emph{pruning}, and ``(p)'' represents \emph{merging/pruning in parallel}.}
    \label{fig:time_of_module}
\end{figure}

\subsection{Sensitivity (RQ4)}

We further study the sensitivity of the primary hyperparameters of the proposed \textsf{MultiEM} through the following experiments. Since the range of the running time of different datasets is too wide, following \cite{min2020emogi,chen2020pangolin}, we normalize the running time to show its variation trend better.

\noindent\textbf{Influence of $\gamma$.} We conduct a sensitivity analysis on the threshold $\gamma$ described in Section \ref{method:representation}. The results are shown in Figure \ref{fig:exp:sensitivity}(a). It is observed that as $\gamma$ varies, there are corresponding changes in the matching performance of \textsf{MultiEM}. This is because the value of $\gamma$ affects the selection of significant attributes and thus the entity representation, which is a key factor for the effectiveness of unsupervised entity matching.

\noindent\textbf{Sensitivity to the merging order.} We select four different random seeds \{0, 1, 2, 3\} and repeated the experiments on all datasets. The results are shown in Figure \ref{fig:exp:sensitivity}(b). As observed, our proposed method is not sensitive to the order of tables in the merging phase. The average variation in F1 scores is only 1.4 across all datasets. This finding can be attributed to the fact that in hierarchical merging, every entity will likely compare with another entity at some hierarchy. As a result, the order has little effect on the overall results.

\noindent\textbf{Sensitivity to $m$.} We conduct a sensitivity analysis of the distance threshold $m$ described in Section \ref{method:merging}. It is observed that the matching performance of \textsf{MultiEM} is sensitive to $m$ since the table-wise hierarchical merging strategy of \textsf{MultiEM} relies on the similarity of the entities. Therefore, we choose the optimal $m$ within the range described in Section \ref{exp:setup} for each dataset. In addition, the running time decreases slightly with the increase of $m$ due to the reduced merged pairs.

\noindent\textbf{Sensitivity to $\epsilon$.} We perform a sensitivity analysis of the clustering radius $\epsilon$ in Eq. \ref{equ:dbscan_eps_core} and Eq \ref{equ:dbscan_eps_reachable}. The results are reported in Figures \ref{fig:exp:sensitivity}(e) and \ref{fig:exp:sensitivity}(f). We find that the overall matching performance is stable as the $\epsilon$ varies. In some cases, the F1 score increases when $\epsilon$ increases; in others, it drops. That is because a smaller $\epsilon$ leads to more false outlier entities, while a larger $\epsilon$ will cause misjudgment of some core entities and reachable entities. In addition, we find that the running time of \textsf{MultiEM} under different $\epsilon$ is stable since $\epsilon$ only affects the correctness of pruning and not the number of pruning operations.

\begin{figure}
    \centering
\includegraphics[width=0.5\textwidth]{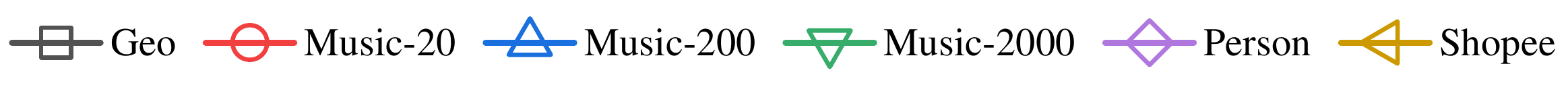}\\
    \subfigure[\textsf{MultiEM}]{
    \includegraphics[width=1.55in]{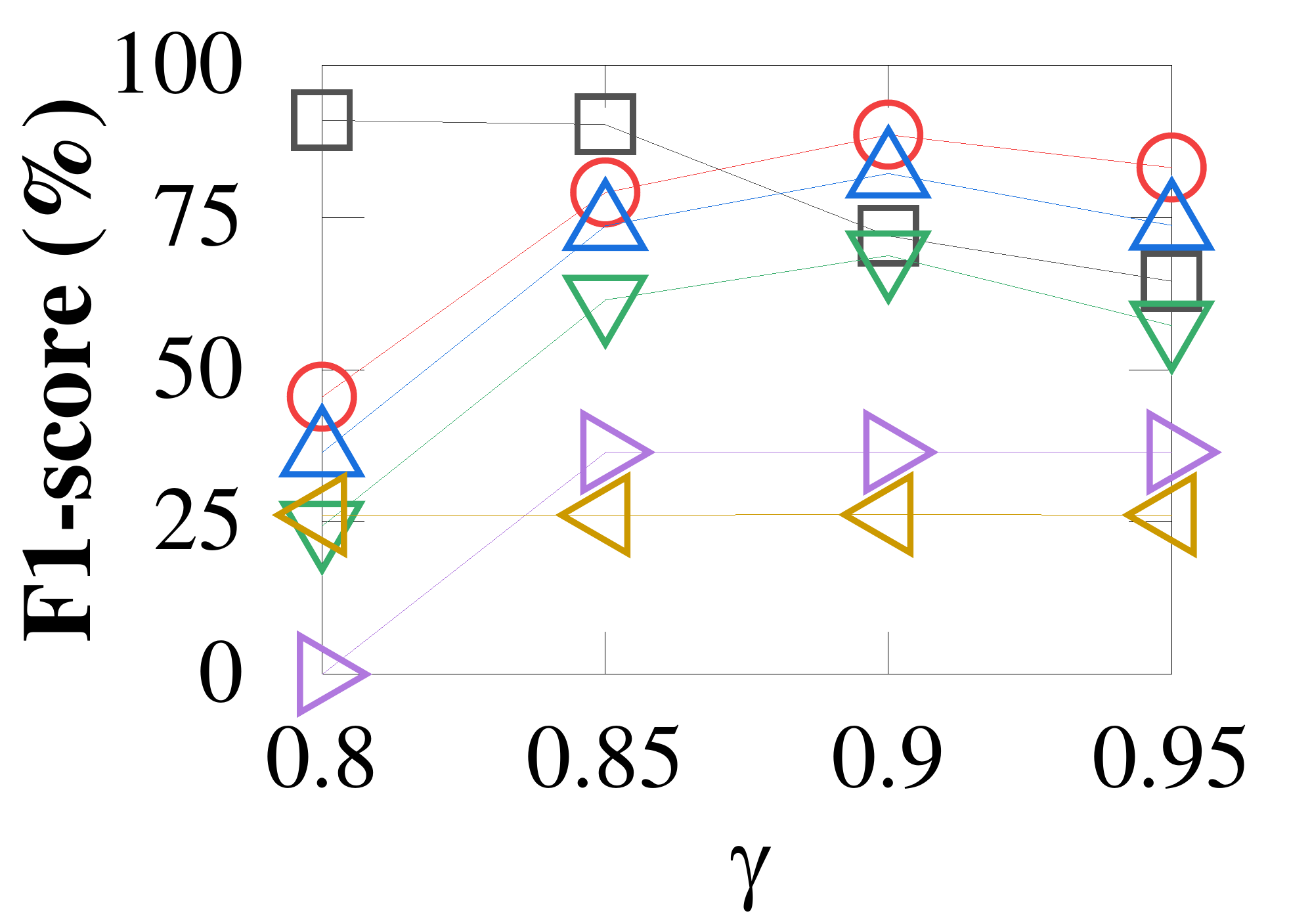}
    }
    \subfigure[\textsf{MultiEM}]{
    \includegraphics[width=1.55in]{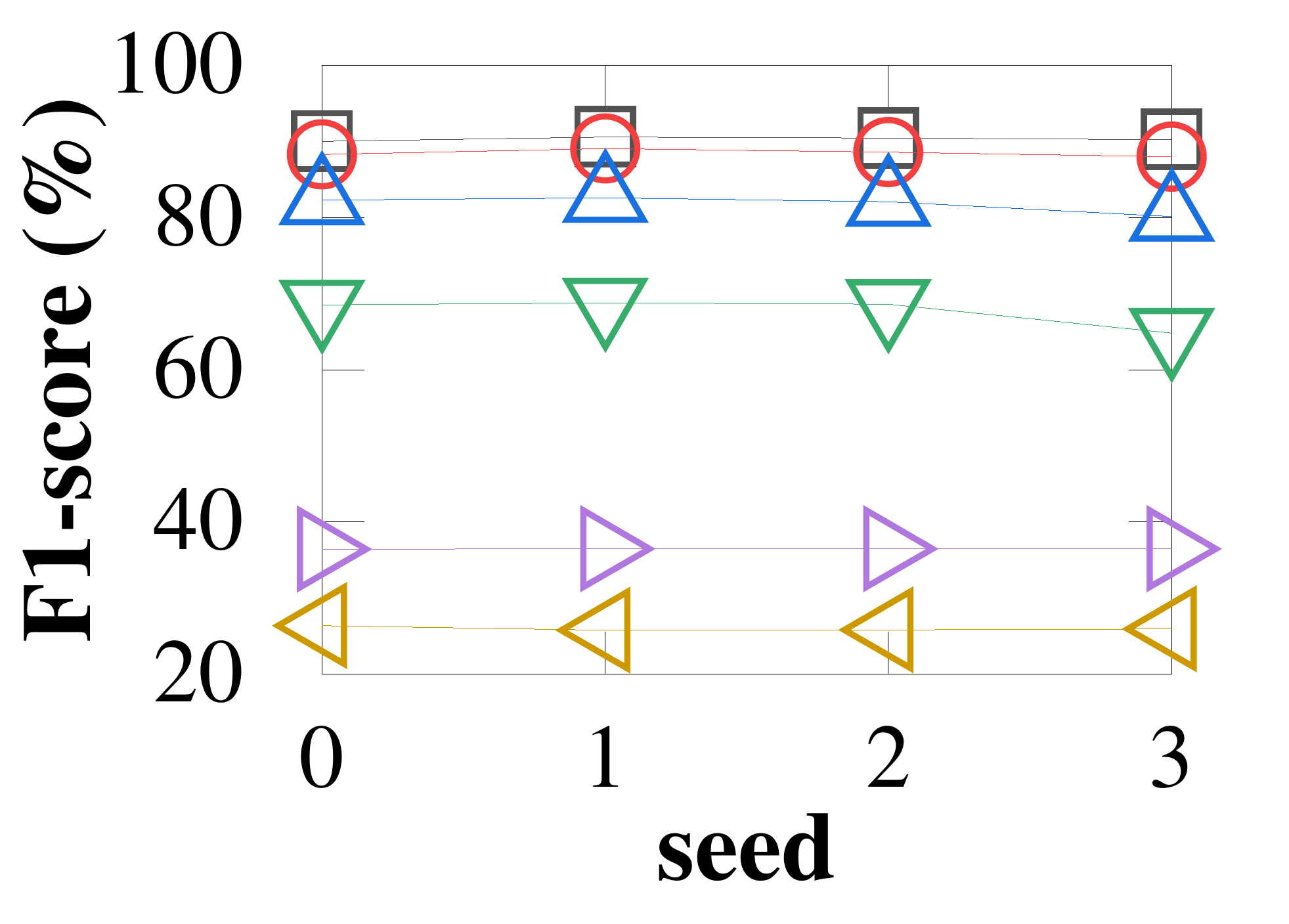}
    }\\
    \subfigure[\textsf{MultiEM}]{
    \includegraphics[width=1.55in]{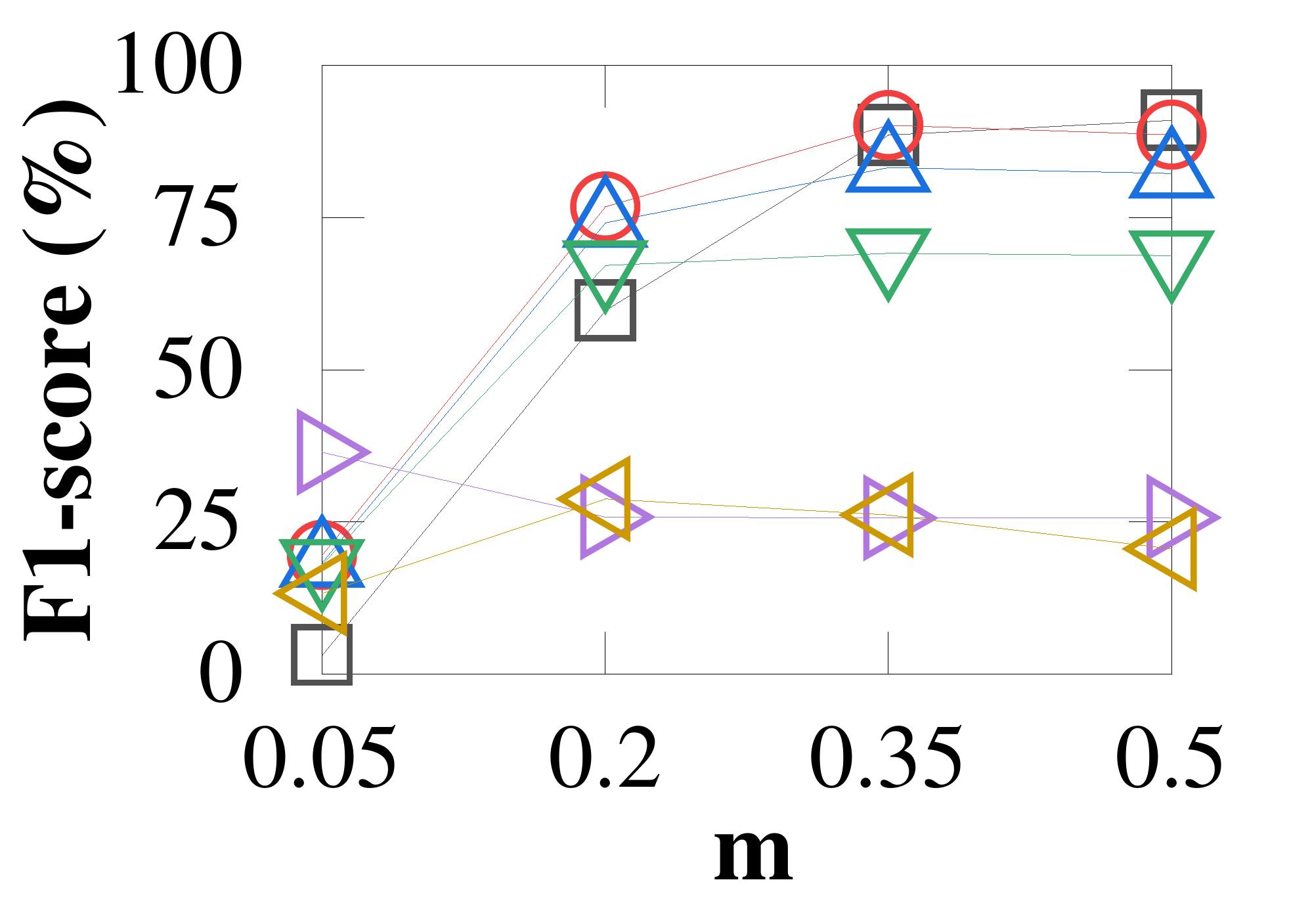}
    }
    \subfigure[\textsf{MultiEM}]{
    \includegraphics[width=1.55in]{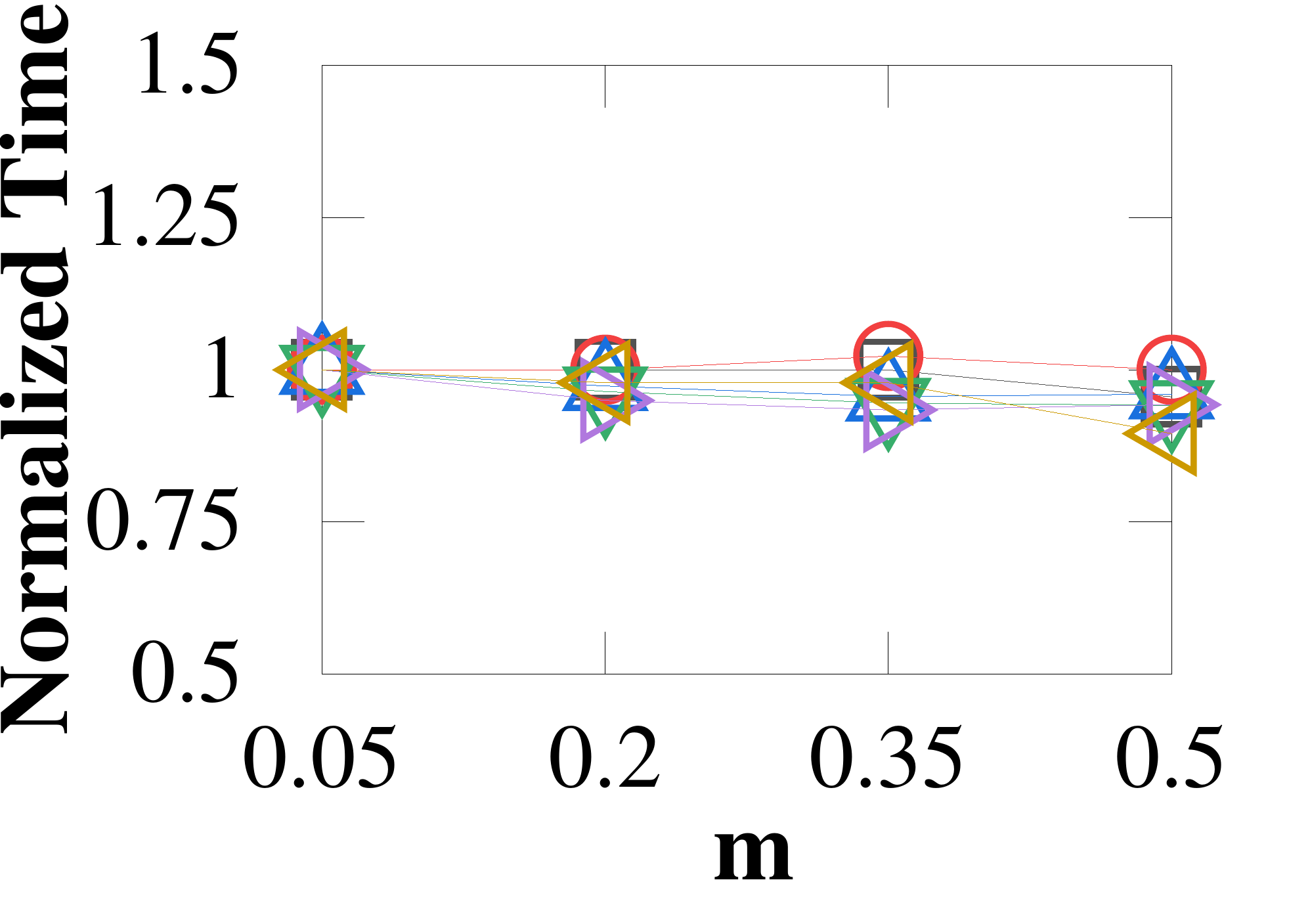}
    }\\
    \subfigure[\textsf{MultiEM}]{
    \includegraphics[width=1.55in]{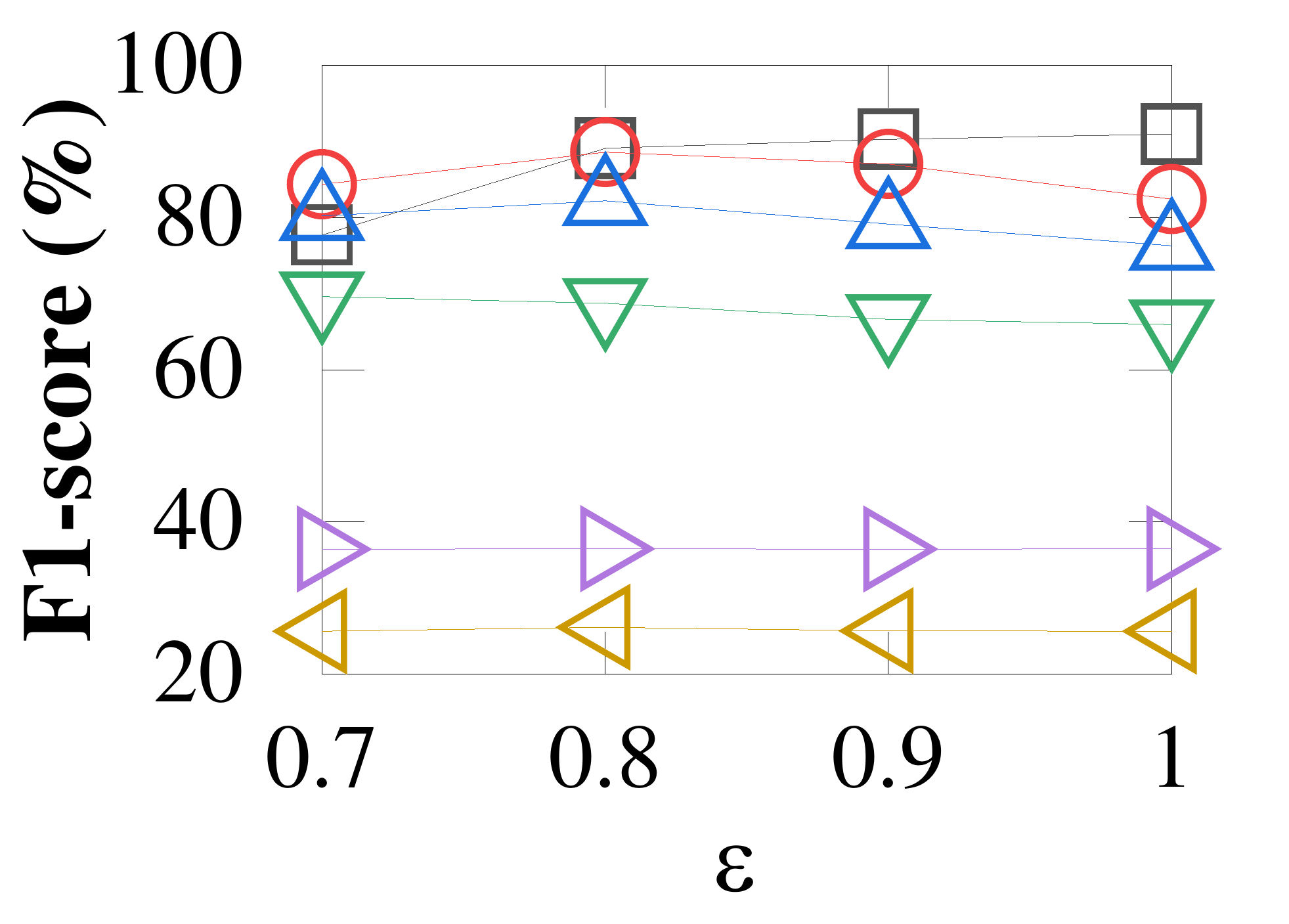}
    }
    \subfigure[\textsf{MultiEM}]{
    \includegraphics[width=1.55in]{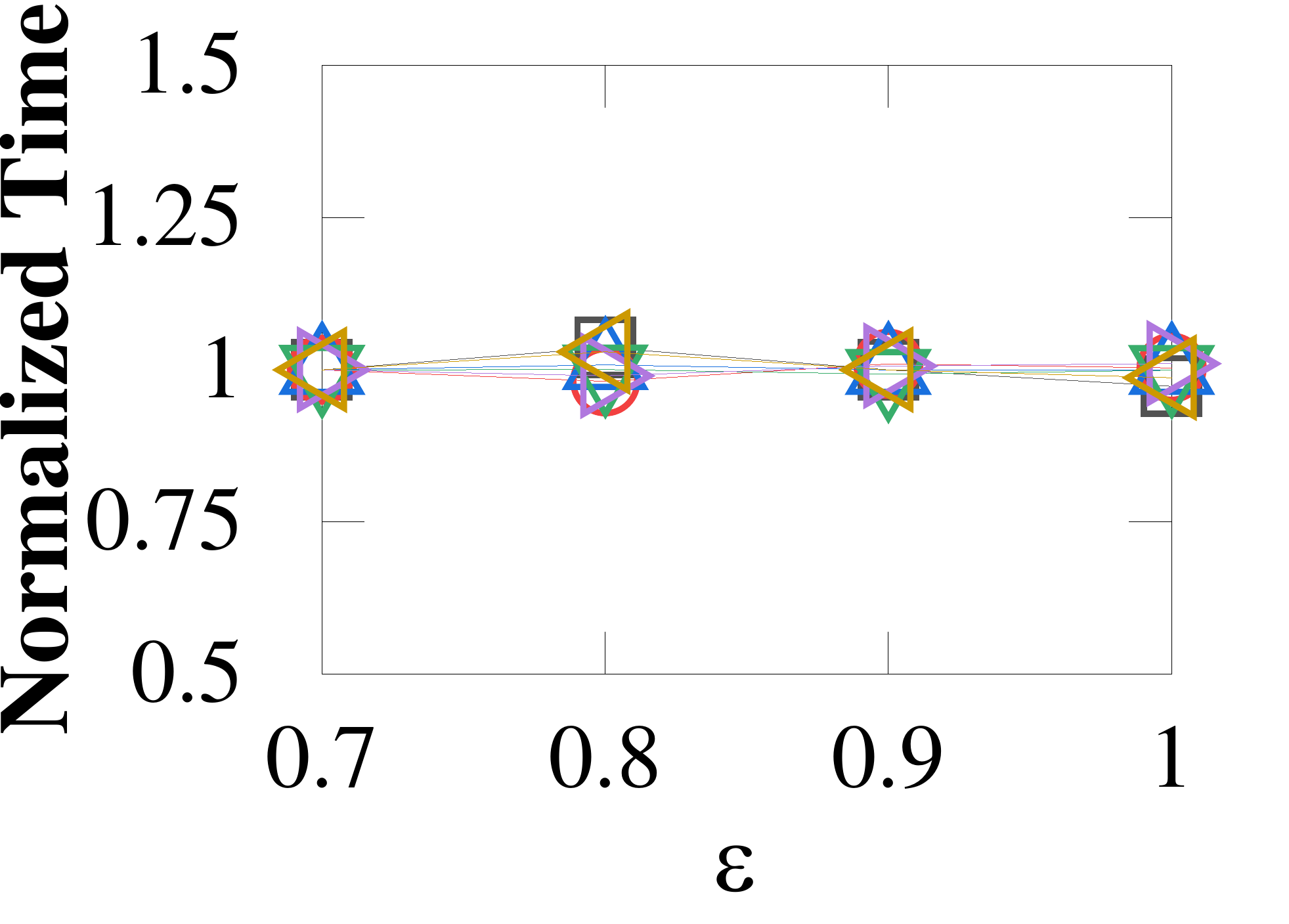}
    }\\
    \caption{Sensitivity analysis.}
    \label{fig:exp:sensitivity}
\end{figure}
\vspace{-2mm}
\section{Related work}
\label{related_work}

Entity Matching (EM) is one of the most fundamental tasks in data management, which is significant for many downstream tasks. Many practical approaches have been developed to solve this problem,  including rule-based methods \cite{elmagarmid2014nadeef,singh2017synthesizing}, crowdsourcing-based methods \cite{gokhale2014corleone, wang2012crowder}, and traditional ML-based methods~\cite{konda2016magellan,li2021auto}. 

In recent years, Deep Learning has been widely used for Entity Matching. \textsf{DeepER} \cite{ebraheem2018distributed} uses deep neural networks as feature extractors and considers EM as a binary classification task. \textsf{DeepMatcher} ~\cite{mudgal2018deep} systematically describes a space of DL solutions for EM. \textsf{Auto-EM} \cite{zhao2019auto} improves performance by pre-training the EM model with entity-type detection as an auxiliary task. \textsf{Ditto} \cite{li2020deep} first applies the pre-trained language models to EM, which gains the SOTA performance. \textsf{JointBERT} \cite{peeters2021dual} and \textsf{Sudowoodo} \cite{wang2022sudowoodo} integrate other purposes/tasks to enhance the matching performance. \textsf{FlexER} \cite{genossar2022flexer} employs  contemporary methods for universal entity resolution tasks. However, DL-based methods rely on lots of labeled samples for better performance. To this end, \textsf{Rotom} \cite{miao2021rotom} leverage meta-learning and data enhancement techniques. \textsf{CollaborEM} \cite{ge2021collaborem} designs a self-supervised framework for EM. In addition, some other studies also try to enhance the performance via active learning \cite{kasai2019low,nafa2022active}, transfer learning \cite{tu2022domain,jin2021deep,kirielle2022transer}, and other promising technologies \cite{gao2022clusterea,liu2023unsupervised,yao2022entity}.

Most EM methods are only designed for two tables, which limits their application in multi-source scenarios. Some studies \cite{lerm2021extended,nentwig2016holistic} apply clustering algorithms to multi-source entity matching. \textsf{MSCD-HAC} \cite{saeedi2021matching} proposes extensions to hierarchical agglomerative clustering to match and cluster entities from multiple sources. \textsf{MSCD-AP} \cite{lerm2021extended} regard multi-table entity matching as an affinity propagation clustering task.  \textsf{ALMSER} ~\cite{primpeli2021graph} proposes a graph-boosted active learning method for multi-source entity resolution. However, as evaluated in Sections \ref{exp:effectiveness} and \ref{exp:efficiency}, they are not effective and efficient enough. 
\vspace{1mm}
\section{Conclusions}
\label{conclusions}

For the first time, we formally study the problem of unsupervised multi-table EM and formulate it as a two-step process (i.e., \emph{merging} and \emph{pruning}). We propose an efficient and effective solution, dubbed \textsf{MultiEM}. First, we present a parallelizable table-wise hierarchical merging algorithm to match multiple tables efficiently. Furthermore, in terms of effectiveness, we enhance the entity representation quality by a novel automated attribute selection strategy and explicitly avoid the transitive conflicts by hierarchical merging. Finally, we develop a density-based strategy to prune outliers and further improve effectiveness. Extensive experimental results on six real-world datasets with various numbers of sources (i.e., from 4 to 20) demonstrate the superiority of \textsf{MultiEM} both in the effectiveness and efficiency compared with the state-of-the-art approaches. 

Based on our analysis, the main limitations of our work are twofold: (i) To ensure efficiency, we focus solely on representation-based entity matching and do not explore more effective interaction-based techniques, which are used for most SOTA supervised EM methods like \textsf{Ditto} and \textsf{PromptEM}; (ii) we overlook the merging paths of each data item in the hierarchical merging, which could be helpful for subsequent pruning.


In the future, we plan to explore a more efficient merging strategy to support larger-scale data, e.g., merging in a distributed manner. And we plan to investigate some interactive technologies, such as self-supervised learning, to enhance effectiveness. These efforts will contribute to advancing the state-of-the-art in entity matching and enable the processing of larger and more complex data in real-world applications.

\newpage
\balance
\clearpage

\bibliographystyle{IEEEtran}
\bibliography{IEEEabrv, refer}

\end{document}